\newcommand{\circnum}[1]{\ding{\numexpr171+#1\relax}}
\newtheorem{theorem}{Theorem} 
\newtheorem{lemma}{Lemma}
  \providecommand\BibTeX{{%
    \normalfont B\kern-0.5em{\scshape i\kern-0.25em b}\kern-0.8em\TeX}}}
\newcounter{siqiang}
\numberwithin{siqiang}{section}
\newcommand{\cmark}{\Checkmark}
\newcommand\halfcmark{\cmark\kern-1.3ex\raisebox{1.0ex}{\rotatebox[origin=c]{125}{\textbf{---}}}}
\begin{document}
\title{RadixGraph: A Fast, Space-Optimized Data Structure for Dynamic Graph Storage (Extended Version)}

\author{Haoxuan Xie}
\email{haoxuan001@e.ntu.edu.sg}
\affiliation{%
  \institution{Nanyang Technological University}
  \country{Singapore}
}

\author{Junfeng Liu}
\email{junfeng001@e.ntu.edu.sg}
\affiliation{%
  \institution{Nanyang Technological University}
  \country{Singapore}
}

\author{Siqiang Luo}
\authornote{Corresponding author.}
\email{siqiang.luo@ntu.edu.sg}
\affiliation{%
  \institution{Nanyang Technological University}
  \country{Singapore}
}

\author{Kai Wang}
\authornote{Work done when the author was working as a research fellow at NTU.}
\email{kai\_wang@hit.edu.cn}
\affiliation{%
  \institution{Harbin Institute of Technology}
  \country{China}
}

\begin{CCSXML}
<ccs2012>
   <concept>
       <concept_id>10002951.10002952.10002953.10010146</concept_id>
       <concept_desc>Information systems~Graph-based database models</concept_desc>
       <concept_significance>500</concept_significance>
       </concept>
   <concept>
       <concept_id>10002951.10002952.10002971</concept_id>
       <concept_desc>Information systems~Data structures</concept_desc>
       <concept_significance>500</concept_significance>
       </concept>
   <concept>
       <concept_id>10002951.10002952.10003190.10010840</concept_id>
       <concept_desc>Information systems~Main memory engines</concept_desc>
       <concept_significance>300</concept_significance>
       </concept>
 </ccs2012>
\end{CCSXML}

\ccsdesc[500]{Information systems~Graph-based database models}
\ccsdesc[500]{Information systems~Data structures}
\ccsdesc[300]{Information systems~Main memory engines}

\renewcommand{\shortauthors}{Haoxuan Xie, Junfeng Liu, Siqiang Luo, and Kai Wang}
\begin{abstract}
Dynamic graphs model many real-world applications, and as their sizes grow, efficiently storing and updating them becomes 
critical. We present RadixGraph, a fast and memory-efficient data structure for dynamic graph storage. RadixGraph features a carefully designed radix-tree-based vertex index that strikes an optimal trade-off between query efficiency and space among all pointer-array-based radix trees. For edge storage, it employs a hybrid snapshot-log architecture that enables amortized $O(1)$ update time. RadixGraph supports millions of concurrent updates per second while maintaining competitive performance for graph analytics. Experimental results show that RadixGraph outperforms the most performant baseline by up to $16.27\times$ across various datasets in ingesting graph updates, and reduces memory usage by an average of $40.1\%$. RadixGraph is open-source at \url{https://github.com/ForwardStar/RadixGraph}.
\end{abstract}

\keywords{In-memory graph system, graph data structure, dynamic graph}
\maketitle
\section{Introduction}
\label{sec:intro}
Graphs are fundamental structures in computer science and widely used to model relationships and interactions between entities, including biological networks \cite{Ma2023,Forster2022,10.1093/bib/bbaa257}, financial networks \cite{cheng2025,10.1145/3641857,10.1145/3677052.3698648} and social networks \cite{10.1145/3539597.3570433,10.1145/3626772.3657962,10.1145/3336191.3371829}. As real-world graphs evolve continuously, the storage and processing of dynamic graphs have become a significant area of focus for both academia and industry. For example, Facebook leverages dynamic graphs to model user relationships at a trillion-scale \cite{10.14778/2824032.2824077}, while Twitter employs large-scale dynamic graphs to generate real-time recommendations \cite{10.14778/2733004.2733010}. These applications demand not only the management of large-scale graphs but also the efficient processing of updates to the evolving graph. Given the high computational cost of graph algorithms and the need for fast query efficiency, there is an urgent need to design in-memory dynamic graph data structures supporting fast query and update operations, while being space efficient~\cite{10.1145/3685980.3685984}. 
However, designing a graph data structure that supports dynamic updates is much more challenging than developing a static structure. It requires wise designs on both vertex index (using which we can locate an index) and edge index (using which we can store and retrieve neighbor edges of a vertex) to deliver both satisfying query and update performance. We summarize existing vertex and edge indices in Table \ref{tab:dyn-graph} and discuss their limitations as follows. 
\begin{table}[t]
\centering
\setlength{\tabcolsep}{2mm}{
\begin{tabular}{l|ccc}
\hline
\textbf{Vertex index}       & \textbf{Update} & \textbf{Query}  & \textbf{Space}           \\
\hline
Static array \cite{graphone,livegraph}      &   Fast     &   Fast     &   High              \\
Multi-level vector \cite{sortledton,gtx} &  Moderate      &  Fast      &  Moderate                \\
Tree-based \cite{cpam,teseo,aspen}       &  Moderate      &  Moderate      &  Low               \\
\hline
\textbf{Edge index}         & \textbf{Insert} & \textbf{Delete} & \textbf{Scan} \\
\hline
PMA-based \cite{pma,vcsr,10.1145/1292609.1292616}          &    $O(lg^2(d))$    &   $O(lg^2(d))$     &   Fast              \\
Log-based \cite{gtx,livegraph,graphone}         &   $O(1)$     &    $O(d)$    &     Fast           \\
Tree-based \cite{sortledton,gastcoco}         &   $O(lg(d))$     &   $O(lg(d))$    &   Moderate              \\
\hline
\end{tabular}}
\caption{Comparison of existing vertex and edge indices. ``Scan'' refers to the efficiency of sequential scan, which requires scanning a set of edges and is a common access pattern in graph queries. $d$ represents the average degree of vertices.}
\label{tab:dyn-graph}
\vspace{-1mm}
\end{table}

\vspace{1mm}
\noindent\textbf{Challenges of vertex index.} Existing graph systems adopt various vertex indexing schemes, each reflecting a trade-off among \textbf{update efficiency}, \textbf{query performance}, and \textbf{space utilization}. Common designs include: (1) \emph{static arrays} that directly index vertices of contiguous IDs, (2) \emph{multi-level vectors} that are often coupled with hash tables for ID translation, and (3) \emph{tree-based structures} such as B-trees~\cite{Btree} and adaptive radix trees (ART)~\cite{ART}.
\emph{Static arrays} \cite{graphone,livegraph} provide constant-time access but incur relatively high space overhead especially when vertex identifiers are non-contiguous. Unfortunately, in dynamic graphs, IDs are often hashed or generated as UUIDs (e.g., DataStax Enterprise Graph uses URIs\footnote{\url{https://docs.datastax.com/en/dse/6.9/graph/using/vertex-and-edge-ids.html}}) which are inherently non-contiguous.
\emph{Multi-level vectors} \cite{sortledton,gtx} mitigate the space issue by remapping arbitrary IDs into a compact range via hash tables, yet frequent reallocation and remapping operations during resizing can be a performance concern.
\emph{Tree-based structures} \cite{teseo,cpam,aspen} are the state-of-the-art approaches that offer logarithmic-time lookups and better space elasticity, which achieves a better trade-off between space and operational costs. 
In this paper, we show that we can achieve an even better space and performance trade-off, by redesigning radix trees which inherently avoid costly operations such as node split and merge in typical tree-based approaches, while our new design on trie structure selection renders a nice space optimization. 

\vspace{1mm}
\noindent\textbf{Challenges of edge index.} Existing dynamic graph systems generally adopt one of three edge storage paradigms: (1) Packed Memory Array (\emph{PMA-based}), (2) \emph{log-based}, or (3) \emph{tree-based}, each embodying distinct trade-offs among \textbf{update efficiency}, \textbf{efficient sequential scan}, and \textbf{space overhead}.
\emph{PMA-based designs}~\cite{pma,vcsr,10.1145/1292609.1292616} reserve gaps within arrays for edge insertions, and rebalance the gaps during updates. This offers excellent spatial locality and sequential scan performance but incurs non-trivial space and maintenance costs to manage reserved gaps within the array.
\emph{Log-based approaches}~\cite{gtx,livegraph,graphone} mainly store edge logs within an array, which delivers efficient sequential scan, low memory overhead and high insertion throughput via append-only updates. In this design, outdated entries must be located and invalidated through log traversal, incurring extra costs. 
\emph{Tree-based structures}~\cite{sortledton,gastcoco} organize adjacent edges of each vertex in a hierarchical layout chained with pointers, which offer a balanced trade-off between update efficiency and memory usage but reduce sequential scan performance compared to flat or contiguous storage.

\vspace{1mm}
\noindent\textbf{The problem.} \emph{Can we design a dynamic graph data structure that minimizes vertex space with high update/query efficiency, provides $O(1)$ edge operations, and preserves efficient sequential access?}

\vspace{1mm}
\noindent{\bf Our solution.} To address the problem, we propose RadixGraph, a highly performant for both updates and reads and space-efficient dynamic graph system that tackles the problem in the following ways:

\vspace{1mm}
\noindent{\bf SORT: a space-efficient and high-performance vertex index.} As an efficient data structure for indexing, the radix tree has emerged as a promising solution. The classical radix tree has a well-bounded query latency, coupled with its ability to avoid node splitting and merging when inserting vertices. Empirically, radix-tree-based solutions (Spruce \cite{spruce}, RadixGraph) show  higher vertex insertion throughput than other systems based on multi-level vectors or ARTs (Figure \ref{fig:dynamic-ops}(d)). Specifically, a recent work, Spruce~\cite{spruce}, employs the van Emde Boas tree (vEB-tree), which is a variant of the radix tree, to reduce the space cost by adjusting the fanouts of the trees. 
However, their adjustment is sub-optimal in the trade-offs between space cost and query performance, and how to optimally adjust the fanouts to achieve the best trade-off remains an open issue. 
To address this challenge, we firstly identify which possible radix tree structures can index a given ID universe, and formulate a constrained optimization program to select the best one that minimizes the space cost. Effectively solving the program gives us the space-optimized radix structure, dubbed as SORT. 

\vspace{1mm}
\noindent{\bf Snapshot-Log edge storage architecture: $O(1)$ operation time for insert, update, and delete with sequential access support.} 
Traditional approaches often face a trade-off between the efficiency of insertions, updates, deletions or sequential scans. To address this challenge, we introduce a snapshot-log edge storage architecture that separates the adjacency array into two distinct segments: a read-only snapshot and an updatable log. All updates and deletions are directly appended to the log segment as delta entries, which are then compacted into the snapshot segment once the log becomes full. By configuring the size of the new log segment after compaction to match that of the new snapshot, we ensure that the cost of this lazy compaction process remains bounded, achieving an $O(1)$ amortized cost per operation while preserving an optimal $O(d)$ get-neighbor query time and efficient sequential scan. To our best knowledge, this is the first edge structure that simultaneously supports $O(1)$ insertion, update, and deletion while maintaining sequential access. The $O(1)$ complexity is optimal and independent of vertex degree, allowing our design to remain both theoretically promising and practically simple.

\vspace{1mm}
\noindent{\bf Contributions.} 
We present RadixGraph, a novel in-memory data structure for dynamic graph storage that combines two innovative designs: a space-optimized vertex index (SORT) and a snapshot-log edge storage architecture. Empirically, we have used the widely adopted Graph Framework Evaluation (GFE) benchmark \cite{teseo} to compare RadixGraph against the state-of-the-art in-memory graph storage structures. Results demonstrate that RadixGraph achieves up to $16.27\times$ faster graph updates, and reduces memory consumption by an average of $40.1\%$ compared to the strongest baseline.

\section{Background}
\subsection{Radix tree and its variants}
\label{sec:radix-tree}
The radix tree (also known as trie) \cite{trie,knuth,Boehm2011EfficientII} is a highly efficient data structure for data management. Unlike other tree-like structures (e.g., B-tree \cite{Btree}, AVL tree \cite{AVL} and red-black tree \cite{RBtree}), the radix tree does not explicitly store keys in its internal nodes. Instead, it distributes a key into individual parts across multiple nodes in different layers to reduce redundancy. Advanced radix tree techniques mainly contain path-compressed radix trees \cite{PATRICIA,WORT,ART} and adaptive radix trees (ART) \cite{ART,DisART}.
\begin{figure}[t]
    \centering
    \includegraphics[width=1.\linewidth]{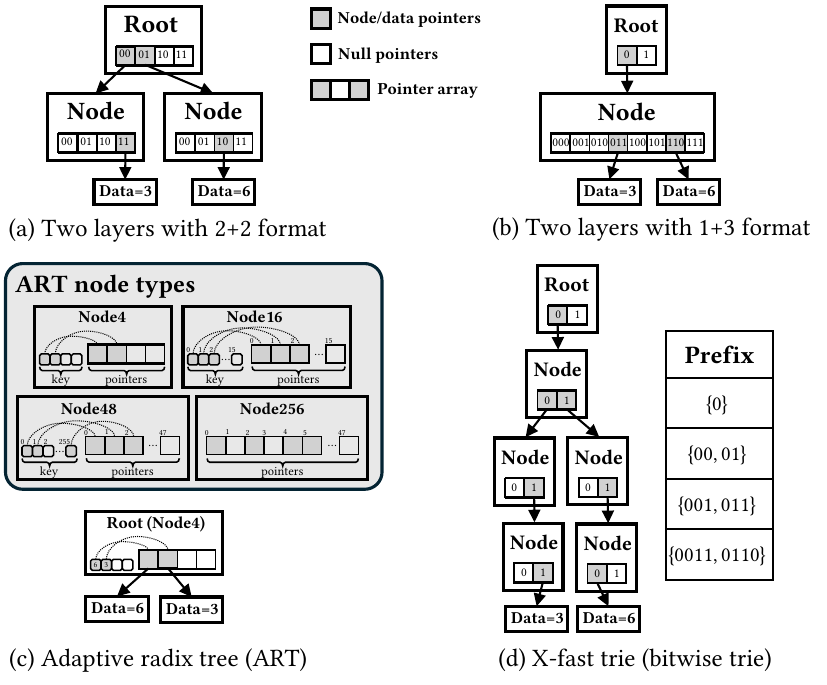}
    \caption{Different radix tree structures for storing 4-bit integers. Only integers 3 (i.e., \texttt{0011}) and 6 (i.e., \texttt{0110}) are inserted.}
    \label{fig:radixtree}
    \vspace{-1mm}
\end{figure}

\vspace{1mm}
\noindent\textbf{Canonical $l$-layer radix tree.} 
A canonical $l$-layer radix tree supports insertion and query operations in $O(l)$ time. 
Each layer partitions the key space by a fixed number of bits, and each node maintains a pointer array to their child nodes for further indexing. 
Specifically, if the $i$-th layer indexes $a_i$ bits, then each node in that layer maintains a pointer array of size $2^{a_i}$. 
Different ways of partitioning the key space lead to different radix tree structures. 
For example, Figures~\ref{fig:radixtree}(a) and~\ref{fig:radixtree}(b) illustrate two possible radix trees with $l=2$ for indexing 4-bit integers. 
To retrieve the integer $3$ (i.e., \texttt{0011}), the search in Figure~\ref{fig:radixtree}(a) follows branch \texttt{00} in the first layer and then \texttt{11} in the second layer. 
In contrast, the search in Figure~\ref{fig:radixtree}(b) follows branch \texttt{0} in the first layer and then \texttt{011} in the second layer. 
Given a key space $U = \{0, 1, \dots, u-1\}$, a radix tree may occupy up to $O(u)$ space. 
In practice, however, it remains space-efficient because only keys that actually occur are materialized. 
Nevertheless, the null pointers within the pointer arrays can still incur memory overhead. 
For instance, the empty branches \texttt{10} and \texttt{11} in the root node of Figure~\ref{fig:radixtree}(a) correspond to null pointers in the root node’s pointer array, even though they do not store any valid entries.

\vspace{1mm}
\noindent\textbf{Adaptive radix tree \cite{ART}.} 
To mitigate the space wasted by null pointers in pointer arrays, the adaptive radix tree (ART) was proposed. 
ART typically indexes 8 bits per layer. 
Instead of allocating a fixed pointer array of size $2^{8}$, ART defines four node types---\texttt{Node4}, \texttt{Node16}, \texttt{Node48}, and \texttt{Node256}---capable of storing up to 4, 16, 48, and 256 child pointers, respectively. 
The node type is dynamically chosen based on the number of non-empty child nodes. 
For example, Figure~\ref{fig:radixtree}(c) illustrates an ART instance for indexing 4-bit integers. 
Since ART indexes 8 bits per layer, only one layer is required, and because there are only two keys stored, a \texttt{Node4} is allocated. 
ART significantly reduces space consumption compared to canonical radix trees. 
However, it incurs additional lookup cost for \texttt{Node4} and \texttt{Node16} types, as locating a child requires scanning the whole pointer array. 
Furthermore, insertion throughput may degrade due to node resizing operations when a node’s capacity is exceeded.

\vspace{1mm}
\noindent\textbf{X/Y-fast trie~\cite{xfast}.} 
An X-fast trie is a special bitwise trie augmented with efficient successor query support. 
A bitwise trie corresponds to the case where $l = \lceil lg(u) \rceil$, where each layer indexes exactly one bit, as illustrated in Figure~\ref{fig:radixtree}(d). 
To enable fast successor queries, the X-fast trie maintains a hash table for each layer, storing all prefixes that appear at that level.  
However, in workloads that do not require successor queries, the maintenance of these hash tables introduces additional space and update overhead. 
A Y-fast trie~\cite{xfast} extends the X-fast trie by combining it with $O(n / lg(u))$ red–black trees to store $n$ elements. 
While this design improves space efficiency and supports $O(lglg(u))$ dynamic updates, the need to split and merge trees during updates can degrade practical performance and pose challenges to concurrency control.



\subsection{Existing solutions and limitations}
\label{sec:baselines}
Numerous studies have investigated dynamic graph storage. In this work, we focus on \textit{in-memory} dynamic graph systems, which typically consist of a vertex index and specially designed edge storage structures to support both vertex and edge updates.

\vspace{1mm}
\noindent
\textbf{Vertex index designs.} The vertex index aims to support fast vertex updates with minimal memory overhead. Most existing dynamic graph systems employ a large pre-allocated vertex array for indexing~\cite{stinger,graphone,livegraph,risgraph,gastcoco,llama}, typically of size $2^{32}$ that corresponds to the common range of vertex identifiers. However, this approach results in substantial memory waste, as vertex identifiers in dynamic graphs are often non-contiguous, leading to fragmentation and inefficient space utilization. To address this issue, recent systems adopt adaptive radix trees (ART)~\cite{teseo}, van Emde Boas trees (vEB-trees)~\cite{spruce}, or multi-level vector indices (often integrated with an external hash map)~\cite{sortledton,gtx}. Although these adaptive structures provide more flexible indexing, they may incur additional time overhead from frequent structural modifications during insertions. For example, multi-level vector indices trigger hash map resizing and rehashing once the load factor is exceeded, while ART must resize and migrate node arrays when a node becomes full. In contrast, the vEB-tree achieves high throughput with its fixed structure, but this rigidity gives further room to optimize space efficiency and adaptability to different ID spaces.

\vspace{1mm}
\noindent\textbf{Adjacency list-based edge structures.} Dynamic graph systems based on adjacency lists primarily address two key challenges: (1) improving the efficiency of edge insertions and deletions, and (2) linking edge blocks to enhance read performance. Stinger~\cite{stinger} organizes edges into blocks to improve read efficiency. GraphOne~\cite{graphone} adopts a hybrid approach by combining an adjacency list with an edge log list, where new edges are initially logged and later batch-materialized into the adjacency list. Similarly, LiveGraph~\cite{livegraph} and GTX~\cite{gtx} introduce the Transactional Edge Log (TEL) to manage edge modifications. However, they require traversing the entire adjacency list to locate and remove edges, leading to $O(d)$ complexity per update or deletion, where $d$ is the degree of the vertex. To enhance efficiency, RisGraph~\cite{risgraph} employs an indexed adjacency list combined with sparse arrays, significantly improving read performance while reducing traversal overhead. Sortledton~\cite{sortledton} and GastCoCo~\cite{gastcoco} replace traditional edge lists with unrolled skip lists and B+ trees, respectively, achieving efficient edge insertions, updates, and deletions in $O(lg(d))$ time complexity. Similarly, Aspen \cite{aspen} and CPAM \cite{cpam} introduce C-tree, an improved version of the B+-tree that also supports $O(lg(d))$ time complexity, which enhances cache locality and is optimized for parallel batch-updates. Despite their efficient updates, these approaches introduce additional index structures for managing edge lists, which increase overall space consumption. Spruce~\cite{spruce} buffers newly inserted edges in a fixed-size array, which is merged into the sorted array once full. While this approach achieves high space efficiency, the merging process introduces an amortized $O(d)$ cost per edge insertion.

\vspace{1mm}
\noindent\textbf{CSR-based edge structures.} Dynamic graph systems based on CSR primarily aim to support efficient edge insertions and deletions while preserving cache-friendly access patterns. CSR++~\cite{firmli_et_al:LIPIcs.OPODIS.2020.17} introduces a segmented CSR representation that divides the edge array into multiple segments, which reduces the need for full reconstruction. However, it still includes local shifting costs within segments during updates. LLAMA~\cite{llama} takes a versioned approach by maintaining multiple graph snapshots, enabling efficient temporal queries but at the cost of increased memory usage. Another line of work replaces CSR’s contiguous edge array with a Packed Memory Array (PMA)~\cite{pma,vcsr,10.1145/1292609.1292616}, which maintains edges in a partially sorted, dynamically resizable array. For instance, Teseo~\cite{teseo} stores PMAs in the leaf nodes of its vertex index, while Terrace~\cite{terrace} combines sorted arrays, PMAs, and B+ trees to handle vertices of varying degrees. PPCSR \cite{ppcsr} further enhances the PMA to support concurrent operations. Compared with previous approaches, PMA provides efficient updates with complexity guarantees while preserving sequential access. The main challenge for PMA is the practical trade-off between space consumption and update efficiency. Under highly dynamic workloads, PMA requires rebalancing and can fragment memory or temporarily reduce update throughput. Although the theoretical space of most edge structures is $O(m)$, the practical trade-off may result in difference in actual space consumption.
\section{Design of RadixGraph}
Our graph system, RadixGraph, identifies a space-optimized radix tree (SORT) as the vertex index and stores the vertex information in a vertex table. RadixGraph also facilitates a novel snapshot-log edge structure supporting $O(1)$ time for edge insertions, updates and deletions. Figure \ref{fig:structure} shows a high-level overview of RadixGraph. In this example, there are 5 vertices with IDs 3, 49, 2, 52 and 1 inserted sequentially, and their information is stored in the vertex table. Since the vertex IDs are 6-bit integers, the SORT has 3 layers that indexes 3 bits, 2 bits and 1 bit, respectively.  The SORT indexes these vertex IDs and offers their locations (i.e., offsets) in the vertex table. For example, the binary format of ID 52 is \texttt{110100}, and its offset (96) can be thereby retrieved by traversing these bits in SORT. Each vertex is associated with an edge array which is equally partitioned into a snapshot segment and a log segment, where the snapshot segment stores the edges and the log segment stores the recent edge updates of that vertex. For example, in the edge array of vertex 49, the log block $(null, (64), 126)$ represents deleting the edge from vertex 49 to 2 at timestamp 126.

In the following, we will show the vertex storage structure in Section \ref{sec:vertex-index} and how to optimize SORT in Section \ref{sec:optimize-SORT}. Then we introduce the edge storage in Sections \ref{sec:edge-structure}, respectively. Moreover, we discuss the complexities and concurrency control in Section \ref{sec:discuss}.
\begin{figure*}[ht]
    \centering
    \includegraphics[width=1.\linewidth]{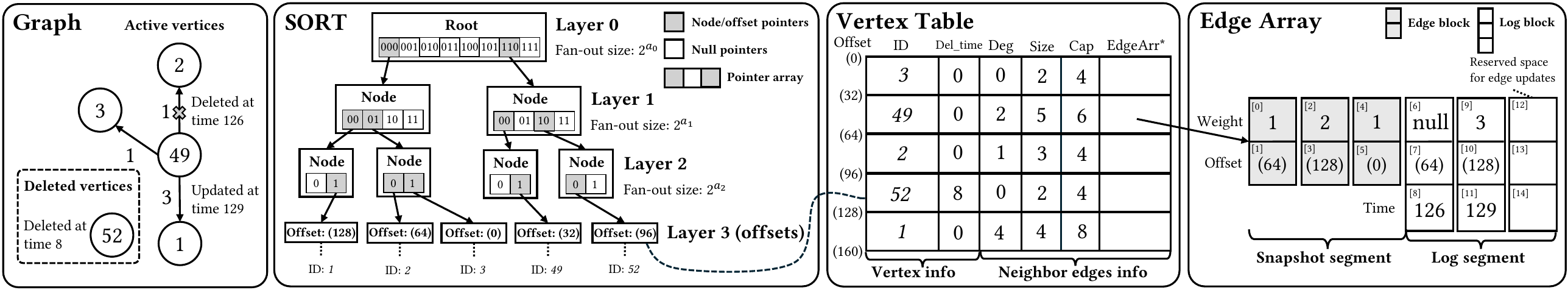}
    \caption{The high-level overview of RadixGraph. Parenthesis ``(x)'' represents an offset corresponding to the location in the vertex table. The fan-out size $2^{a_i}$ is the pointer array size of a node at layer $i$ and is pre-determined by our optimizer. In this example, $a_0=3,a_1=2,a_2=1$.}
    \label{fig:structure}
    \vspace{-2mm}
\end{figure*}
\subsection{SORT: a Space-Optimized Radix Tree}
\label{sec:vertex-index}
\noindent\textbf{Motivation.} RadixGraph employs a radix-tree-like structure to map vertices to their locations in the vertex table. As discussed in Section~\ref{sec:radix-tree}, there are multiple possible structures for a radix tree. For canonical $l$-layer radix trees, a common practice is to assign the same fan-out size to different layers, and we denote it as uniform-tree. Given $n$ integers in an integer universe $U=\left\{0,1,\cdots,u-1\right\}$ and the number of layers $l$, the fan-out size of a uniform-tree is $2^{\left\lceil lg(u)/l\right\rceil}$. Moreover, the Van Emde Boas tree (vEB-tree) \cite{vEB} recursively partitions the universe into $O(\sqrt{u})$ subtrees. For instance, when $u=2^{64}$, the top-level fan-out is $2^{32}$, followed by a second layer with fan-out $2^{16}$, leading to a depth of $O(lglg(u))$. However, both of them apply a rigid fan-out size across layers, which often fails to strike an optimality balance between space usage and efficiency. To address this, RadixGraph introduces a novel \textbf{S}pace-\textbf{O}ptimized \textbf{R}adix \textbf{T}ree (SORT) that determines the optimal fan-out at each layer according to graph data, which minimizes the average space costs given the number of layers $l$ of the radix tree. In Table \ref{tab:radix-tree-tradeoff}, we set the same $l$ for different radix tree structures to align the query efficiency, which is $O(l)$, and compare the performances of them. SORT exhibits the least memory usage since it optimizes the space cost, and this also benefits the insertion efficiency. For ART, although it introduces flexible node types to index its children and exhibits competitive space consumption compared with SORT, its efficiency is slower due to the need to transform between these node types. Detailed comparisons and analysis are provided in Sections \ref{sec:case-study} and \ref{sec:ablation}. X/Y-fast trie can be viewed as an extended bitwise trie, which fits the canonical $l$-layer radix tree case when $l=O(lg(u))$. Therefore, the SORT optimization scheme can also be integrated for these variants. However, they are mainly optimized for successor queries and require maintaining extra components to support them, which are redundant for our graph storage task since we do not require successor queries.
\begin{table}[t]
    \centering
    \fontsize{7pt}{1em}\selectfont
    \begin{tabular}{c|c|c|c|c|c|c}
    \hline
     \multirow{2}{*}{$n$} & \multicolumn{2}{c|}{Uniform-tree} & \multicolumn{2}{c|}{vEB-tree} & \multicolumn{2}{c}{SORT}\\
     \cline{2-7}

     & Memory & Insertion & Memory & Insertion & Memory & Insertion \\
    \hline\hline
    $10^3$ & 4.38MB & 6.8ms & 2.62MB & 4.7ms & \textbf{\underline{0.85MB}} & \textbf{\underline{2.8ms}} \\
    \hline
    $10^4$ & 38.11MB & 29ms & 20.41MB & 23ms & \textbf{\underline{5.23MB}} & \textbf{\underline{16ms}} \\
    \hline
    $10^5$ & 295.60MB & 172ms & 119.12MB & 118ms & \textbf{\underline{32.09MB}} & \textbf{\underline{86ms}} \\
    \hline
    \end{tabular}
    \caption{Performances of different $O(lglg(u))$-layer radix tree structures for inserting $n$ random IDs in $[0,2^{32}-1]$.}
    \label{tab:radix-tree-tradeoff}
    \vspace{-1mm}
\end{table}

\vspace{1mm}
\noindent{\bf Structure of SORT.} Figure \ref{fig:structure} shows the high-level structure of a SORT. Specifically, if the depth of the tree is set as $l$ ($l\geq 1$), then SORT begins with a root node at layer 0, and each node in layer $i$ has an array of $2^{a_i}$ pointers, where $a_i$ would be pre-determined by the optimizer, which will be introduced shortly in Section \ref{sec:optimize-SORT}. Each pointer is either a null pointer or points to a child node at layer $(i+1)$. Specifically, the leaf 
nodes at layer $l$ store the byte offsets of their corresponding vertices, which correspond to their locations in the vertex table.

\begin{algorithm}[ht]
    \caption{Insert-Vertex($N,i,v,a[]$)}
    \label{alg:insert}
    
    \KwIn{the current tree node $N$ and its layer $i$ and the vertex identifier $v$ in binary format; $a_i$ is pre-determined by the optimizer.}
    
    $x\gets$ the first $a_i$ bits of $v$\;
    \If{$N.children[x]$ is a null pointer} {
        \If{$i<l-1$}{
            Create a child node $c$, initialize its pointer array of size $2^{a_{i+1}}$ and store the pointer of $c$ in $N.children[x]$\;
        }
        \Else{
            Store the vertex in a new entry of the vertex table\;
            Store the offset of the vertex in $N.children[x]$\;
            \textbf{return}\;
        }
    }
    Delete the first $a_i$ bits of $v$\;
    Insert-Vertex($N.children[x],i+1,v,a$)\;
\end{algorithm}

Algorithm \ref{alg:insert} details the process for inserting a vertex into SORT. The idea is simple yet effective: the vertex identifier is divided into individual segments in its binary format, and each segment is processed in a corresponding layer of the tree. At each layer, the value of the segment determines the specific child node to which the vertex is assigned. If the corresponding child node has not been created, we create a child node and store its pointer. The process continues recursively until the vertex is created and its offset is stored at the $l$-layer in the tree.

The process of retrieving a vertex from SORT follows a similar approach to Algorithm \ref{alg:insert}, which costs $O(l)$ time. By traversing the tree based on the segments of the vertex identifier, the algorithm either returns the desired vertex offset or returns a null value (e.g., -1) if the vertex has not been stored.

\begin{figure}
    \centering
    \includegraphics[width=1.\linewidth]{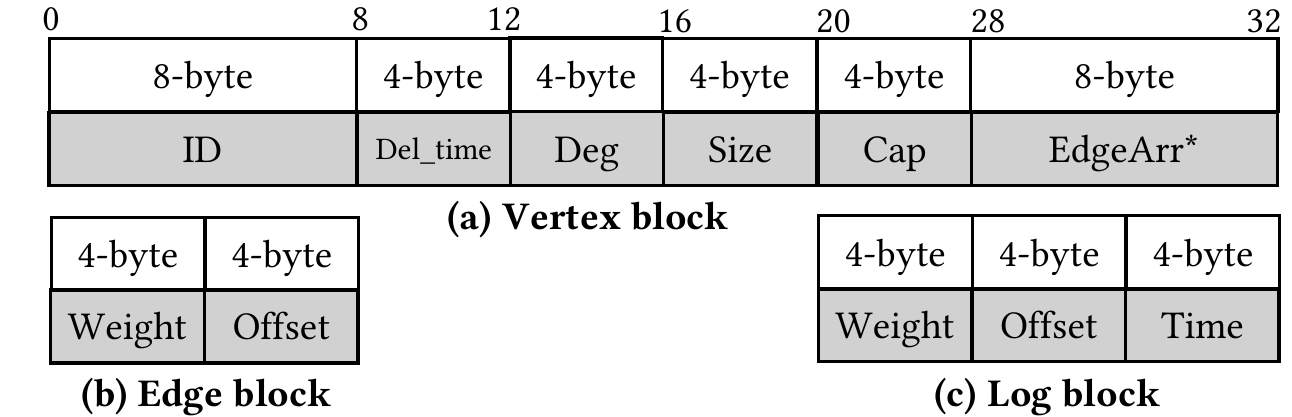}
    \caption{The data layout of vertex, edge and log blocks.}
    \label{fig:layout}
    \vspace{-1mm}
\end{figure}
\vspace{1mm}
\noindent\textbf{Vertex table.} RadixGraph maintains an expandable vertex table with segmented storage, implemented using Intel's TBB concurrent vector \cite{intel_tbb}. When a segment reaches capacity, a new segment is allocated with twice the size of the previous one. The previous segments are kept such that related updates on vertices within them can process without being blocked. This segmented design ensures that once a vertex is inserted, its physical location remains fixed, avoiding data movement and preventing read-write conflicts. Figure~\ref{fig:layout} illustrates the layout of a vertex block, which consists of the vertex ID (i.e., \textit{ID}), the deletion timestamp (i.e., \textit{Del\_time}), the degree of the vertex (i.e., \textit{Deg}), the number of occupied blocks in the edge array (i.e., \textit{Size}), the total number of blocks in the edge array (i.e., \textit{Cap}), and a pointer to the edge array (i.e., \textit{EdgeArr*}). By default, \textit{Del\_time} is initialized to 0, indicating the vertex is active. When a vertex is deleted, RadixGraph sets \textit{Del\_time} to the current timestamp $t$, making the vertex invisible to transactions with timestamps greater than $t$. 

For garbage collection, the offsets of deleted vertices are stored in a queue. Figure \ref{fig:structure} shows an example where vertex 52 is deleted. The queue will store its offset (96), and when there is a vertex insertion, the offset is retrieved from the queue and the new vertex will reuse this offset and its corresponding slot in the vertex table. More specifically, when a new vertex is inserted, the system first checks the queue for reusable slots via atomic compare-and-swap (CAS) and only expands the vertex table when there are no available slots. Deleted vertices are only purged from the queue when all transactions before ``Del\_time'' are finished for MVCC consistency.
\subsection{Optimizing SORT configuration}
\label{sec:optimize-SORT}
\noindent\textbf{Problem definition.} Let the vertex identifier \( X \) be a discrete random variable uniformly distributed over the integer domain \(\{0, 1, \ldots, 2^{x}-1\}\), denoted as \( X \sim \mathcal{U}_d(0, 2^{x}-1) \).
Given \( n \) distinct identifiers sampled from this domain and a fixed number \( l \), we aim to find an optimal radix tree configuration within the family of canonical \( l \)-layer radix trees.
Formally, our objective is to determine the structure \( T^{*} \in \mathcal{T}_{l} \) that minimizes the average space consumption:
$$
T^{*} = \arg\min_{T \in \mathcal{T}_{l}} \; \mathbb{E}_{X \sim \mathcal{U}_d(0, 2^{x}-1)}[\text{Space}(T, X, n)],
$$
where \(\mathcal{T}_{l}\) denotes the set of canonical \(l\)-layer radix trees capable of indexing \(n\) distinct \(x\)-bit integers. Although our analysis assumes a uniform integer domain, the proposed method remains effective under skewed workloads. 
Empirical evaluation on such workloads is provided in Section~\ref{sec:case-study}.

\vspace{1mm}
\noindent\textbf{Overview.} We now present an overview for determining the optimal structural configuration of SORT. 
The objective is to minimize the average space cost of the tree under distinct and uniformly distributed integer identifiers, formulated as:
$$
\min_{a_i \in \mathbb{N}} \quad \sum_{i=0}^{l-1} 2^{a_i} \cdot N(i) \cdot p(i),
\quad \text{s.t.} \quad 2^{a_0 + a_1 + \cdots + a_{l-1}} \ge 2^{x}
$$
Here, \(a_i\) is the logarithmic fan-out size of layer \(i\) for the SORT structure. Given $a_i$ for all $0\leq i\leq l-1$, \(N(i)\) is defined as the maximum number of nodes that can be instantiated in layer \(i\), and \(p(i)\) denotes the probability that a node at layer \(i\) is instantiated. The optimization requires input $l,x$ and $n$, where $l$ is the number of layers, \(x\) is the bit-length of each vertex identifier and \(n \ge 1\) is the total number of vertices.

The objective function expresses the total space consumption of SORT as a layer-wise summation, where \(N(i) \cdot p(i)\) represents the expected number of instantiated nodes in layer \(i\), and \(2^{a_i}\) corresponds to the pointer array size of each node. 
Thus, the objective is the expectation of \(\text{Space}(T, X, n)\), representing the average space usage of a canonical \(l\)-layer radix tree under uniform identifiers.
The constraint defines the feasible space \(\mathcal{T}_l\), ensuring that SORT spans the entire identifier domain \([0, 2^{x}-1]\).  
In the following, we derive analytical formulations for \(N(i)\) and \(p(i)\), which enable a closed-form optimization of the above objective.
\begin{figure}[t]
    \centering
    \includegraphics[width=1.\linewidth]{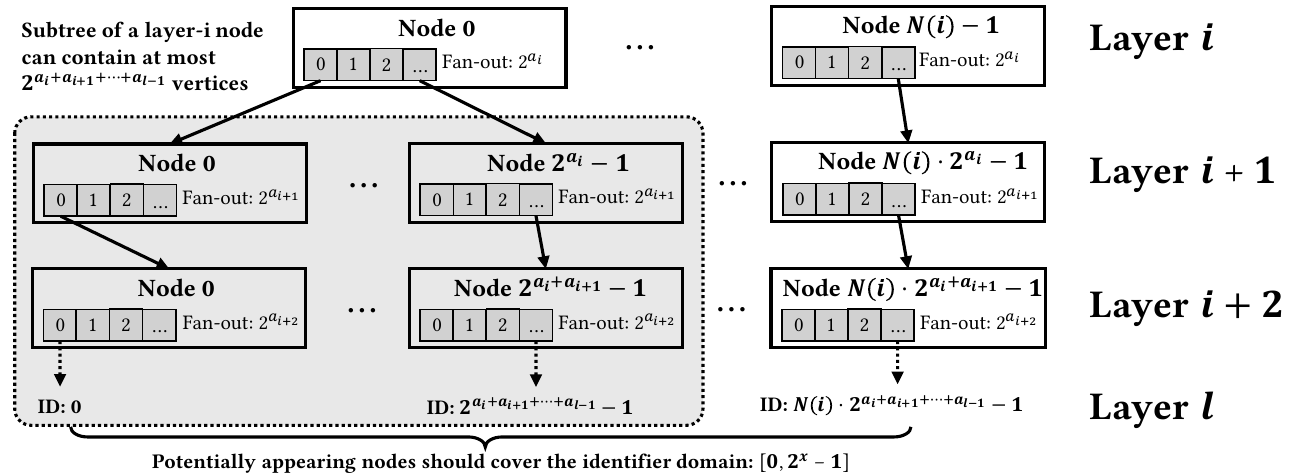}
    \caption{The representation intervals of $i$-layer nodes.}
    \label{fig:node-interval}
    \vspace{-1mm}
\end{figure}

\vspace{1mm}
\noindent{\bf Objective function formulation.} For layer $i$, as shown in Figure \ref{fig:structure}, each internal node contributes to a pointer array of size $2^{a_i}$. Specifically, for layer 0, since $n\geq 1$, the root node must be created and contain an array of $2^{a_0}$ child pointers, and $N(0)=1,p(0)=1$. For a node at layer $i$ ($i>0$), its subtree can store at most $2^{a_i+a_{i+1}+\cdots+a_{l-1}}$ graph vertices, whose IDs span an interval $[L,R]$, and all intervals of nodes at layer $i$ are pairwise disjoint. For example, as shown in Figure \ref{fig:node-interval}, where layer 0 to $l-1$ stores internal tree nodes and layer $l$ stores graph vertices, the internal node 0 in layer $i$ covers graph vertices within an ID interval $[0,2^{a_i+a_{i+1}+\cdots+a_{l-1}}-1]$, which are stored in layer $l$. For simplicity, we denote:
$$S_i=R-L+1=2^{a_i+a_{i+1}+\cdots+a_{l-1}}$$
as the length of the node interval for layer $i$. Therefore, the root node (or equivalently, the whole SORT) can accommodate at most $S_0=2^{a_0+a_1+\cdots+a_{l-1}}$ graph vertices, which should cover the identifier domain $[0,2^x-1]$ and thus we derive the constraints of the optimization.

We observe the optimal configuration must satisfy $S_i\leq 2^x$ for $0\leq i<l$. Otherwise, the covered ID interval of the leftmost node at layer $i$ can already represent all the vertices within $[0,2^x-1]$. Then $N(i)=1$ and $p(i)=1$, and reducing $a_i$ to $a_i-1$ can result in a smaller objective value. Given the form of $S_i$ and the fact that $S_i\leq 2^x$, this indicates that $S_i$ always divides $2^x$ and a layer $i$ tree node can thereby be classified as two cases: (1) its interval lies entirely within the domain $[0,2^x-1]$; (2) its interval lies completely outside this domain.

For case (2), the node would never be created since its interval exceeds the range of vertex identifiers, and thus does not contribute to the space cost. Therefore, the maximum number of tree nodes $N(i)$ that can appear in layer $i$ corresponds to the nodes of case (1):
$$
N(i)=\frac{2^x}{S_i}=2^{x-(a_{i}+a_{i+1}+\cdots+a_{l-1})}
$$

For case (1), we consider the complement event that the node is {\bf not} created, which means that all $n$ vertices are distributed in other $(2^x-S_i)$ possible IDs out of its corresponding interval. Since all vertex identifiers are distinct integers, the probability of the complement event is $\left(\begin{matrix}
    2^x-S_i\\
    n
\end{matrix}\right){\bigg/}\left(\begin{matrix}
    2^x\\
    n
\end{matrix}\right)$ by hypergeometric distributions \cite{feller1}, and we immediately have the probability of this node being created as:
$$
    p(i)=1-\left(\begin{matrix}
    2^x-S_i\\
    n
\end{matrix}\right){\bigg/}\left(\begin{matrix}
    2^x\\
    n
\end{matrix}\right)=1-\frac{(2^x-2^{a_i+\cdots+a_{l-1}})!(2^x-n)!}{(2^x)!(2^x-2^{a_i+\cdots+a_{l-1}}-n)!}
$$
where $\left(\begin{matrix}
    a\\
    b
\end{matrix}\right)=\frac{a!}{b!(a-b)!}$ is the combination number representing the number of different ways to take $b$ items from $a$ distinct items. Note that $\left(\begin{matrix}
    a\\
    b
\end{matrix}\right)=0$ when $a<b$, meaning $p(i)=1$ when $2^x-S_i<n$.

Putting everything together, now we are ready to formulate the complete optimization problem:
\[
\min_{a_i \in \mathbb{N}} 2^{a_0}+\sum_{i=1}^{l-1} 2^{x-(a_{i+1}+\cdots+a_{l-1})}\left(1-\frac{(2^x-2^{a_i+\cdots+a_{l-1}})!(2^x-n)!}{(2^x)!(2^x-2^{a_i+\cdots+a_{l-1}}-n)!}\right)
\]
$$
\text{s.t.} \quad a_0+a_1+\cdots+a_{l-1}\geq x
$$

\vspace{1mm}
\noindent{\bf Solving the optimization problem.} Directly solving the optimization problem via integer programming incurs exponential time complexity. Fortunately, by exploiting structural properties of the objective function, the problem can be simplified and solved in polynomial time.

To simplify the optimization problem, we define $s_i=\sum_{j=0}^ia_j$. Then $s_j\in\mathbb N,s_{l-1}\geq x$ and $s_j\leq s_{j+1}$ for all $0\leq j<l-1$. We can rewrite the objective function as:
$$
    f(s_0,\cdots,s_{l-1})=2^{s_0}+\sum_{i=1}^{l-1}\frac{2^x}{2^{s_{l-1}-s_i}}\left(1-\frac{(2^x-2^{s_{l-1}-s_{i-1}})!(2^x-n)!}{(2^x)!(2^x-2^{s_{l-1}-s_{i-1}}-n)!}\right)
$$

We now present its optimality condition in Lemma \ref{lemma:optimality}, which shows the optimal tree structure should fit into the interval $[0,2^{x}-1]$ exactly. That is, its representation range should be $[0,2^x-1]$.
\begin{lemma}
    \label{lemma:optimality} (Proof in Appendix A)
    There exists an optimal solution $(s_0^*,s_1^*,\cdots,s_{l-1}^*)$, such that $s_{l-1}^*=x$.
\end{lemma}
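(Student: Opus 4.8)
The plan is to argue that if we have any optimal solution $(s_0^*,\ldots,s_{l-1}^*)$ with $s_{l-1}^* > x$, we can construct another feasible solution with the same or smaller objective value in which the last coordinate equals $x$. The key intuition, already hinted at in the text, is that any representation range strictly larger than $[0,2^x-1]$ is wasteful: the ``excess'' only inflates pointer-array sizes without ever being populated, since vertex IDs live in $[0,2^x-1]$.

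First I would set $\delta = s_{l-1}^* - x > 0$ and consider the modified tuple obtained by decreasing every coordinate $s_i^*$ that exceeds $x$ down to $\max(s_i^*, x) = x$, or more cleanly, replacing $s_i^*$ by $\min(s_i^*, x)$ for all $i$. I need to check three things: (i) the modified tuple is still feasible, i.e., the values remain in $\mathbb{N}$, are non-decreasing, and the last one is $\ge x$ — here it equals $x$ exactly, and monotonicity is preserved because $\min(\cdot, x)$ is a monotone operation; (ii) the objective does not increase; (iii) hence the modified tuple is also optimal, which yields the claim. The real work is in (ii).

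For step (ii), I would examine the objective $f(s_0,\ldots,s_{l-1})$ term by term. The first term $2^{s_0}$ can only stay the same or shrink when $s_0$ is replaced by $\min(s_0,x)$. For each summand in $\sum_{i=1}^{l-1}$, I would track how both the prefactor $2^x / 2^{s_{l-1}-s_i} = 2^{x - s_{l-1} + s_i}$ and the bracketed probability term change. The exponent $x - s_{l-1} + s_i$ involves the difference $s_i - s_{l-1}$; since we are reducing $s_{l-1}$ by $\delta$ while reducing $s_i$ by at most $\delta$ (and by exactly $\delta$ when $s_i^* > x$, by less otherwise), I would show this exponent does not increase, so the prefactor does not increase. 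For the bracketed term $1 - \binom{2^x - 2^{s_{l-1}-s_{i-1}}}{n} / \binom{2^x}{n}$: the quantity $2^{s_{l-1}-s_{i-1}} = S_i$ is the node interval length at layer $i$; I would argue that under the reduction, $S_i$ either stays the same or decreases, and since $p(i) = 1 - \binom{2^x - S_i}{n}/\binom{2^x}{n}$ is a non-decreasing function of $S_i$ (a larger interval is harder to leave empty), the bracketed term does not increase. Combining, every term is non-increasing, so $f$ does not increase.

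The main obstacle I anticipate is the bookkeeping in handling coordinates that are \emph{partially} clamped — i.e., an $s_i^*$ with $x < s_i^*$ but... actually, since the coordinates are non-decreasing, once one exceeds $x$ all later ones do too, so the set $\{i : s_i^* > x\}$ is an upper tail $\{k, k+1, \ldots, l-1\}$. This structure simplifies matters considerably: for $i < k$ nothing changes, and for $i \ge k$ everything shifts down by exactly $\delta$. The one subtlety is the coupling at the boundary index $i = k$, where $s_{i-1}^* = s_{k-1}^* \le x$ is unchanged but $s_i^*$ drops — here I must verify the corresponding summand's prefactor $2^{x - s_{l-1} + s_i}$ still decreases (it does, since $s_{l-1}$ drops by $\delta$ and $s_k$ drops by $\delta$... wait, $s_k$ might drop by less than $\delta$ if $s_k^* < x + \delta = s_{l-1}^*$; but it drops by $s_k^* - x \le \delta$, and $s_{l-1}$ drops by exactly $\delta \ge s_k^* - x$, so the net exponent change $-\delta + (s_k^* - x) \le 0$). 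I will lay out a clean case split on $i < k$ versus $i = k$ versus $i > k$ and verify the inequality in each, which should make the argument airtight without heavy computation.
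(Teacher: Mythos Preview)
Your strategy---clamp each $s_i^*$ exceeding $x$ down to $x$ and argue that every summand of $f$ weakly decreases---matches the paper's own sketch, but your monotonicity claim for the prefactor has the sign backwards. Write the exponent as $x+(s_i-s_{l-1})$: clamping reduces $s_{l-1}$ by $\delta$ and $s_i$ by some $\epsilon_i\le\delta$, so $s_i-s_{l-1}$ changes by $\delta-\epsilon_i\ge 0$. The exponent is therefore non-\emph{decreasing}, and the prefactor can strictly grow. Concretely, with $l=3$, $x=4$, $(s_0,s_1,s_2)=(1,2,5)$, the $i=1$ prefactor exponent goes from $4-5+2=1$ to $4-4+2=2$ under clamping. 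Your boundary computation ``net exponent change $-\delta+(s_k^*-x)\le 0$'' has exactly this sign flip; the correct change is $+\delta-(s_k^*-x)\ge 0$.

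This breaks the term-by-term argument: one factor increases while the other decreases, so the product need not fall. In the same example with $n=2$, the $i=1$ summand rises from $2\cdot p(16)=2$ to $4\cdot p(8)\approx 3.07$ under clamping, even though $f$ as a whole still drops (the $i=2$ term compensates). So the plan as written will not close. A repair that \emph{does} give a clean per-term inequality is to shift all coordinates: set $s_j\mapsto s_j-1$ for every $j$ (equivalently decrement $a_0$). Then every difference $s_{l-1}-s_i$ and $s_{l-1}-s_{i-1}$ is preserved, so each summand with $i\ge 1$ is literally unchanged while $2^{s_0}$ halves; iterate until $s_{l-1}=x$. This works directly whenever $s_0\ge s_{l-1}-x$; the residual case $s_0<s_{l-1}-x$ (where some $S_i>2^x$ and the derivation of $f$ is already outside its stated regime) still needs a short separate treatment.
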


Then our problem becomes how to select $s_0,s_1,\cdots,s_{l-2}$ to optimize the internal structure of the tree, and the objective function $f$ can be simplified by substituting $s_{l-1}$ with $x$:
$$
f(s_0,\cdots,x)=2^{s_0}+\sum_{i=1}^{l-1}2^{s_i}\left(1-\frac{(2^x-2^{x-s_{i-1}})!(2^x-n)!}{(2^x)!(2^x-2^{x-s_{i-1}}-n)!}\right)
$$

We observe that for the $i$-th summation term in $f$, its value only depends on $s_i$ and $s_{i-1}$, which inspires us to solve the problem via \textbf{dynamic programming} that recursively solves the optimal value of the current state $(s_i)$ from optimal previous states $(s_{i-1})$'s.

Specifically, we define an auxiliary function $g$, such that $g(i,j)$ represents the minimum average space cost for the first $(i+1)$ layers when $s_i=j$. In fact, $g(i,j)$ is the optimal value of considering $2^{s_0}$ and the summation of first $i$ terms in $f$ when $s_i=j$. Therefore, we could solve $g(i,j)$ via dynamic programming, and the transition is given by:
\begin{equation}
    \label{eq:transition}
    g(i,j)=\min_{0\leq k\leq j}g(i-1,k)+2^{j}\left(1-\frac{(2^x-n)!(2^x-2^{x-k})!}{(2^x)!(2^x-2^{x-k}-n)!}\right)
\end{equation}
with the initial states $g(0,j)=2^{j}$ for all $0 \le j \le x$. The main idea of this dynamic programming, is to partition the tree by layers. For each layer $i$, we enumerate all possible values $k$ for $s_{i-1}$ and values $j$ for $s_i$. Then we can derive the optimal inductions from layer $i-1$ to layer $i$ for all $s_i=j$. The initial state $g(0,s_0)$ corresponds to $2^{s_0}$ and thereby equals $2^j$ when $s_0=j$. Inductively solving $g(i,j)$ from $i=0$ to $i=l-1$ gives us the globally optimal solution.

\begin{theorem}
    (Proof in Appendix A) The optimal value of $f$ is equal to $g(l-1,x)$.
    \label{theorem:optimality}
\end{theorem}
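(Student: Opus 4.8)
The theorem follows by combining Lemma~\ref{lemma:optimality} with a correctness argument for the dynamic program of Equation~\eqref{eq:transition}. By Lemma~\ref{lemma:optimality}, there exists an optimal feasible point $(s_0^*,\dots,s_{l-1}^*)$ with $s_{l-1}^*=x$, so the optimal value of $f$ over all feasible sequences equals the minimum of the reduced objective $f(s_0,\dots,s_{l-2},x)$ (obtained by substituting $s_{l-1}=x$) over all integer sequences $0\le s_0\le s_1\le\cdots\le s_{l-2}\le x$. Hence it suffices to prove that $g(l-1,x)$ equals this reduced minimum.

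To this end, write $c(k):=1-\frac{(2^x-n)!\,(2^x-2^{x-k})!}{(2^x)!\,(2^x-2^{x-k}-n)!}$, a well-defined number in $[0,1]$ (equal to $1$ in the degenerate regime $2^x-2^{x-k}<n$, where the binomial vanishes). For $0\le i\le l-1$ and $0\le j\le x$, let $h(i,j)$ denote the minimum of the partial cost $2^{s_0}+\sum_{t=1}^{i}2^{s_t}c(s_{t-1})$ over all integer sequences $0\le s_0\le\cdots\le s_i=j$. Then $h(l-1,x)$ is exactly the reduced minimum identified above, so the theorem reduces to the claim $g(i,j)=h(i,j)$ for all admissible $i,j$.

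I would establish $g(i,j)=h(i,j)$ by induction on $i$. The base case $i=0$ is immediate, since the only feasible sequence is $s_0=j$ and $h(0,j)=2^j=g(0,j)$. For the inductive step, note that the last summand $2^{s_i}c(s_{i-1})=2^{j}c(s_{i-1})$ depends only on $j$ and on $s_{i-1}$, and not on $s_0,\dots,s_{i-2}$; moreover a prefix $0\le s_0\le\cdots\le s_{i-1}=k$ is feasible precisely when $k\ge 0$, and it extends to a feasible sequence with $s_i=j$ precisely when $k\le j$. Therefore
\[
h(i,j)=\min_{0\le k\le j}\Bigl(\,\min_{0\le s_0\le\cdots\le s_{i-1}=k}\bigl(2^{s_0}+\textstyle\sum_{t=1}^{i-1}2^{s_t}c(s_{t-1})\bigr)+2^{j}c(k)\Bigr)=\min_{0\le k\le j}\bigl(h(i-1,k)+2^{j}c(k)\bigr),
\]
which, using the induction hypothesis $h(i-1,k)=g(i-1,k)$, is exactly the recurrence of Equation~\eqref{eq:transition}. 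Taking $i=l-1$ and $j=x$ yields $g(l-1,x)=h(l-1,x)$, completing the proof.

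The only genuinely delicate step is the reduction supplied by Lemma~\ref{lemma:optimality}: one must verify that fixing $s_{l-1}=x$ discards no feasible configuration of the internal layers, i.e.\ that every integer sequence $0\le s_0\le\cdots\le s_{l-2}\le x$ with $s_{l-1}=x$ is feasible for the original program. This holds because the original constraints are precisely monotonicity, integrality, and $s_{l-1}\ge x$, with the upper bounds $s_i\le x$ automatic once $s_{l-1}=x$. The remainder is routine bookkeeping — the separability of the last summand and the fact that the DP's enumeration $0\le k\le j$ ranges over exactly the admissible predecessor states.
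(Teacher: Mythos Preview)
Your proof is correct and follows essentially the same approach as the paper: use Lemma~\ref{lemma:optimality} to restrict to $s_{l-1}=x$, then argue that the dynamic program of Equation~\eqref{eq:transition} computes the optimum of the reduced objective. The paper's own proof is much terser—it simply invokes the semantic definition of $g(i,j)$ as the optimal partial cost with $s_i=j$ and then applies Lemma~\ref{lemma:optimality}—whereas you supply the explicit induction that justifies this identification; but the underlying logic is the same.
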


\vspace{1mm}
\noindent\textbf{Tuning the depth of SORT.} Generally, the depth $l$ of the tree should not exceed $x$. Otherwise, there exists $a_i=0$ for some $i$ and these layers can be safely pruned to reduce space costs as discussed above. For dense universe (i.e., $n\approx 2^x$), the optimal $l$ tends to be smaller since most possible identifiers are present, and we can afford a large fan-out (i.e., large $a_i$) without wasting much space. For sparse universe (i.e., $n\ll 2^x$),  the optimal $l$ tends to be larger and $a_i$ should be smaller to reduce excessive empty pointers. Specifically, when $n=2^x$, the optimal setting for minimizing space is to set $l=1$ and $a_0=x$, where the total space is $O(n)$. Since the lookup efficiency of the vertex index depends on $l$, in practice, we can set $l$ based on the required lookup efficiency to compute an optimal setting of $a_i$, and then prune all layers with $a_i=0$ to derive an appropriate setting for the tree. For example, in our experiments, we set $l=O(lglg(u))$, where $[0,u-1]$ is the vertex ID universe to ensure a fair comparison with Spruce, which also utilizes a radix-tree-like structure (vEB-tree) with query complexity $O(lglg(u))$.

\vspace{1mm}
\noindent\textbf{Implementation and adaptation.} Given input $n$ (the number of integers) and $x$ (the bit-length of vertex IDs), we implement the iterative process in Equation \ref{eq:transition} by for-loop structures, where the outermost loop enumerates $i$ from $0$ to $l-1$, the middle loop enumerates $j$ from $0$ to $x$, and the innermost loop enumerates $k$ from $0$ to $j$. This implementation is efficient and takes less than 1 second (5{\textperthousand} of the graph construction time) to compute the optimal setting on our largest experimental dataset \textit{twitter-2010}. As $n$ evolves, the optimal configuration of SORT may gradually shift. However, in real-world graphs, vertex insertions are relatively infrequent compared to edge updates. For instance, Twitter’s daily active users increased from 115 million to 237.8 million between 2017 and 2023\footnote{\url{https://famewall.io/statistics/twitter-stats}}, less than a double growth over six years. Empirically, we observe that SORT’s optimal configuration remains largely stable with respect to $n$; the adjustment intervals grow roughly exponentially with powers of 2 in $n$ (see Section~\ref{sec:case-study}). Moreover, the memory overhead remains small even if the configuration is not updated immediately.

To minimize disruption, we recompute the optimal configuration asynchronously and trigger an update only when $n$ changes exponentially. When an update is required, we employ a lazy transformation strategy: subtrees with unchanged fanouts are preserved by reusing their existing pointers, while only the affected upper levels are reconstructed. As shown in Section~\ref{sec:case-study}, even when SORT is aggressively adapted whenever its configuration changes, each transformation completes within about 1 second, which clearly shows the transformation costs are negligible relative to overall system performance.

\vspace{1mm}
\noindent\textbf{Supplementary information.} Due to the space limit, we refer readers to the appendices for detailed illustrations, proofs and complexity analysis.

\subsection{Snapshot-log edge storage}
\label{sec:edge-structure}
RadixGraph adopts an enhanced log-based structure for edge storage, since it exhibits high edge insertion throughput. Other log-based graph systems such as LiveGraph and GTX \cite{livegraph,gtx} also achieve constant-time edge insertions by appending edges to a per-vertex log array, and they support multi-version concurrency control (MVCC) by maintaining historical edge versions. However, this comes at the cost of low update and deletion efficiency, as each modification requires traversing the whole log array to locate and invalidate the previous version.

RadixGraph maintains an edge array per vertex that is evenly partitioned into a snapshot segment and a log segment. For instance, in Figure \ref{fig:structure}, vertex 49 maintains an edge array of 6 blocks, with 3 blocks in the snapshot segment and 3 in the log segment. The key benefit of this layout is that it enables an out-of-place update strategy which avoids costly traversal during updates: edge updates are first appended to the log segment and are later compacted in batches. When the log segment becomes full (i.e., when the ``Size'' field reaches the ``Cap'' field), a compaction is triggered that combines the snapshot and log segments to produce a new snapshot segment containing only the latest versions of edges. The size of this snapshot segment equals the vertex degree $d$ (recorded in the ``Deg'' field of the vertex block). Since the entire edge array has capacity $2d$, the number of outdated entries is always bounded by $O(d)$. Consequently, all edge operations run in amortized $O(1)$ time: appending an update to the log segment costs $O(1)$, and although each compaction takes $O(d)$ time, this cost is amortized over the $d$ updates processed by the compaction. As a result, the amortized cost of compaction is $O(1)$ per update.

\vspace{1mm}
\noindent\textbf{Edge array.} Each vertex is associated with an edge array composed of two segments: a snapshot segment storing edge blocks and a log segment storing log blocks. Figure~\ref{fig:layout} illustrates the layout of these edge and log blocks. Each edge block stores its metadata (e.g., \textit{Weight}) along with an \textit{Offset} field. The \textit{Offset} field stores the destination vertex's position in the vertex table to track its information, and source vertex is not stored in the edge and log blocks since its information is already presented in the corresponding vertex block. For log blocks, an additional \textit{Time} field is included to support edge versioning. Specifically, if the log entry represents an edge insertion or update, the \textit{Weight} field holds the updated weight. If the log entry corresponds to a deletion, the \textit{Weight} field is set to NULL (e.g., 0). For each operation, the source and destination vertices are first located or inserted into SORT and the vertex table. Then, a log entry is appended to the log segment according to the operation type. Therefore, each directed edge consumes only 1 block; for undirected edges, it will consume 2 blocks since both directions of the edge are inserted.

\begin{itemize}[leftmargin=*] \item Insertion: Adds an edge to the graph, appending a log with the edge's properties and the byte offset of the destination vertex; \item Update: Modifies the properties of an existing edge, appending a log with the edge's updated properties and the byte offset of the destination vertex; \item Deletion: Removes an edge, appending a log with a NULL property and the byte offset of the destination vertex. \end{itemize}

Figure \ref{fig:compaction} illustrates an example of the edge update process, where the insertion of edge from vertex 49 to vertex 2 is firstly appended to the log segment and then triggers a compaction since the edge array is full. More specifically, when the log segment is not full, the update process is lock-free and costs $O(1)$ time: each edge log obtains a unique slot via an atomic \textit{fetch\_add} on the \textit{Size} field. When the edge array becomes full (i.e., ``Size''==``Cap''), a lock is acquired to perform log compaction. Specifically, if the current degree of the vertex is $d$, the capacity of the new array after compaction is $2d$, reserving space for $d$ additional log entries.

\begin{figure}[t]
    \includegraphics[width=0.98\linewidth]{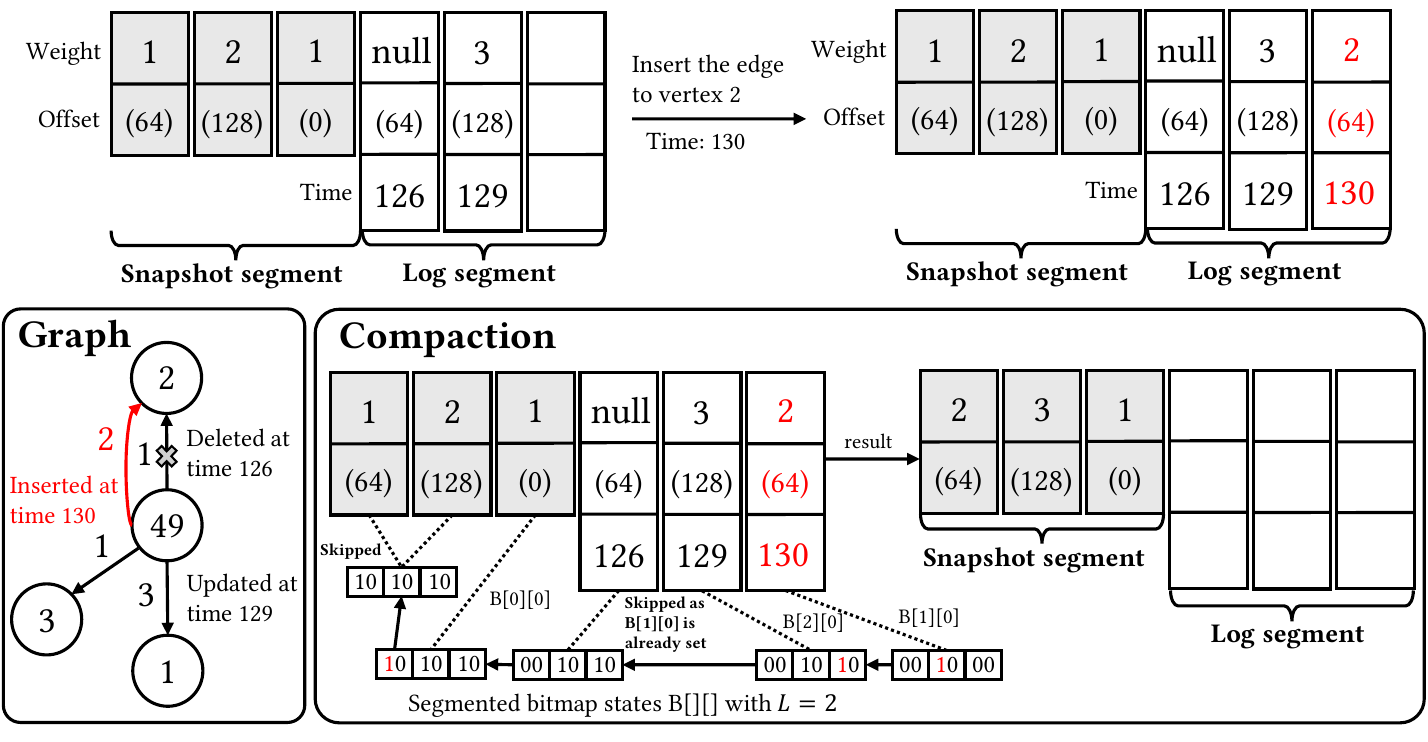}
    \caption{Example edge update and compaction processes. Offset (x) corresponds to the bit $B[(x/32)/L)][(x/32)\text{ mod } L]$.}
    \label{fig:compaction}
    \vspace{-1mm}
\end{figure}

\vspace{1mm}
\noindent\textbf{Log compaction.} As edge logs are appended independently and may create multiple entries for the same destination, the compaction procedure must identify and retain only the latest valid edge among these duplicates. To address this, we introduce \textit{duplicate checker} for each thread to efficiently identify the latest version of an edge. Algorithm \ref{alg:log-compaction} shows the process. During log compaction, we perform a reverse iteration on the array, such that edge logs are traversed from most recent ones to least recent ones. For each edge log, if the duplicate checker finds that its destination vertex has not been visited, this means the edge is the latest version in the current graph, and the edge is thereby added to $A$; otherwise, it is skipped. After processing each edge log, the duplicate checker sets the destination vertex as visited. This ensures outdated edge logs to be skipped and only latest edges are retrieved from the array. After the whole edge array is processed, we reset all vertices as unvisited in the duplicate checker.

\begin{algorithm}[ht]
    \caption{LogCompaction($O_u$)}
    \label{alg:log-compaction}
    
    \KwIn{the target vertex offset $O_u$.}
    $C\gets$ the duplicate checker of current thread\;
    $A\gets$ an edge array of size $2\times V[O_u].Deg$\;
    $E_u\gets V[O_u].EdgeArr,s\gets V[O_u].Size,cnt\gets 0$\;
    \For{$i=s-1,s-2,\cdots,0$} {
        $O_v\gets E_u[i].Offset$\;
        \If{$C$ finds $v$ not visited and $E_u[i].Weight\not=null$} {
            $A[cnt\mathbin{++}]\gets\left\{E_u[i].Weight,O_u\right\}$\;
        }
        Mark $v$ as visited in $C$\;
    }
    \For{$i=0,1,\cdots,s-1$} {
        $O_v\gets E_u[i].Offset$\;
        Mark $v$ as unvisited in $C$\;
    }
    $V[O_u].EdgeArr\gets A,V[O_u].Size\gets cnt$\;
\end{algorithm}

The duplicate checker employs a segmented bitmap to track visited vertices, where each vertex is mapped to a unique bit. The segmented bitmap includes a set of bitmaps (i.e., segments) of fixed length $L$. Before each compaction, it allocates additional segments as needed to ensure the total number of bits is at least $n$. Let $B$ denote the segmented bitmap where $B[i][j]$ corresponds to the $j$-th bit of the $i$-th segment and $O_v$ be the offset of the destination vertex $v$. Since each vertex occupies a 32-byte block in the vertex table, the logical ID of vertex $v$ is $O_v / 32$, which is mapped into $[0,n-1]$. Accordingly, the corresponding bit of vertex $v$ is located at $B[(O_v / 32)/L)][(O_v / 32) \bmod L]$. If this bit equals 1, $v$ has been visited; otherwise, it is unvisited. We provide a complete example about the bitmap states during the compaction in Figure \ref{fig:compaction}. Initially, the bitmap has 3 segments and all bits are 0. For offset (64), its corresponding segment is $64/32/L=1$, and corresponding bit in that segment is $(64/32) \text{ mod } L=0$. Therefore, $B[1][0]$ is set as 1. Later, when traversing all blocks with offset (64), they are skipped since the corresponding bit has already been marked.

\begin{theorem}
    \label{theorem:edge-complexity}
    Under a single-threaded execution model, the amortized time complexity of edge insertion, update and deletion is $O(1)$.
\end{theorem}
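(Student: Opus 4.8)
The plan is to use the accounting (amortization) method on each per-vertex edge array, treating the three operations uniformly since all of them are implemented as a single append to the log segment, possibly followed by a compaction. First I would observe that, by the construction in Section~\ref{sec:edge-structure}, every insertion, update, or deletion on a vertex $u$ performs exactly one $O(1)$ step: appending a log block to the log segment of $u$'s edge array (incrementing \textit{Size} via \textit{fetch\_add}, writing the \textit{Offset}, \textit{Weight}, and \textit{Time} fields). Thus the only non-constant cost is the occasional \texttt{LogCompaction} call, and it suffices to show its cost can be charged to the operations that preceded it.

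Next I would analyze the cost of a single \texttt{LogCompaction} on vertex $u$. It runs two reverse/forward passes over the edge array of current length $s$, and since compaction is triggered precisely when the array is full (\textit{Size}$=$\textit{Cap}), we have $s=\textit{Cap}$. By the invariant that after a compaction producing a snapshot of size equal to the degree $d$ the new capacity is set to $2d$, the array holds a snapshot of $\le d$ edges plus up to $d$ log entries, so $s = \textit{Cap} \le 2d = O(d)$. Each duplicate-checker probe and bit set/reset is $O(1)$ (the segmented bitmap gives constant-time access to $B[(O_v/32)/L][(O_v/32)\bmod L]$), so the whole compaction costs $\Theta(s)=O(d)$ time. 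The key structural fact is that between two consecutive compactions on $u$, the number of operations appended to $u$'s log is exactly the number of free log slots created by the previous compaction, which is $d$ (the other $d$ slots hold the materialized snapshot). Hence each compaction of cost $O(d)$ is immediately preceded by $d$ append operations that have not yet paid for any compaction.

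I would then make the amortized argument precise with a potential/credit scheme: assign each edge operation a constant number of credits, say $c$ credits for a suitable constant $c$; $1$ credit pays for the $O(1)$ append itself, and the remaining $c-1$ credits are deposited on the vertex's edge array. When a compaction fires on $u$, the $d$ operations since the last compaction have deposited $(c-1)d$ credits, which for $c$ large enough (independent of $d$) covers the $O(d)$ compaction cost; moreover the argument extends to the first compaction of $u$'s lifetime, since the array is first allocated with some constant initial capacity and the same counting applies with the degree at that point. Therefore every operation is charged only $O(1)$ amortized time. A small point to handle cleanly is the initial/degenerate cases (a freshly created vertex, or a vertex whose degree is very small so $2d$ is a small constant): here the capacities are constants, so the per-compaction cost is $O(1)$ and charging is trivial.

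The main obstacle — really the only subtle point — is pinning down the invariant ``$\textit{Cap} = 2d$ and hence exactly $d$ fresh log slots per compaction cycle'' rigorously across the whole sequence of operations, including deletions (which still occupy a log block) and updates that target already-present edges (so the post-compaction degree $d$ can stay the same or shrink, not only grow). I would verify by induction on the compaction events that immediately after any \texttt{LogCompaction} the array satisfies $\textit{Size}=\textit{Deg}=d'$ for the current true degree $d'$ and $\textit{Cap}=2d'$ (reading off lines 2 and the final line of Algorithm~\ref{alg:log-compaction}), so that exactly $d'$ operations are needed to refill it; combined with the $O(d')$ compaction cost this closes the amortized bound. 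Everything else is routine bookkeeping.
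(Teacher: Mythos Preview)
Your proposal is correct and rests on the same core invariant as the paper's proof: between two consecutive compactions on a vertex the log segment absorbs exactly as many operations as the snapshot size, and the compaction itself costs linear time in that size. The only cosmetic difference is that you package this via the accounting (credit) method, whereas the paper uses the aggregate method---summing the compaction costs $O(S_0)+O(S_1)+\cdots+O(S_{c-1})$ and observing this equals $O(m)$ since $S_i$ is precisely the number of operations in phase $i$; both arguments are standard and equivalent here.
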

\begin{proof}[Proof Sketch]
    We prove the theorem by discussing the two cases of edge operations: (1) when the log segment is not full, appending an edge log to the segment clearly costs $O(1)$ time; (2) when a compaction is required, the compaction process costs $O(d)$, and is amortized to $O(1)$ for each edge operation. The complete proof can be found in Appendix B.
\end{proof}

We remark that the concurrency overhead is low, as synchronization is only required in vertex-local compactions and does not block updates on other vertices.
\begin{figure}[t]
    \includegraphics[width=0.9\linewidth]{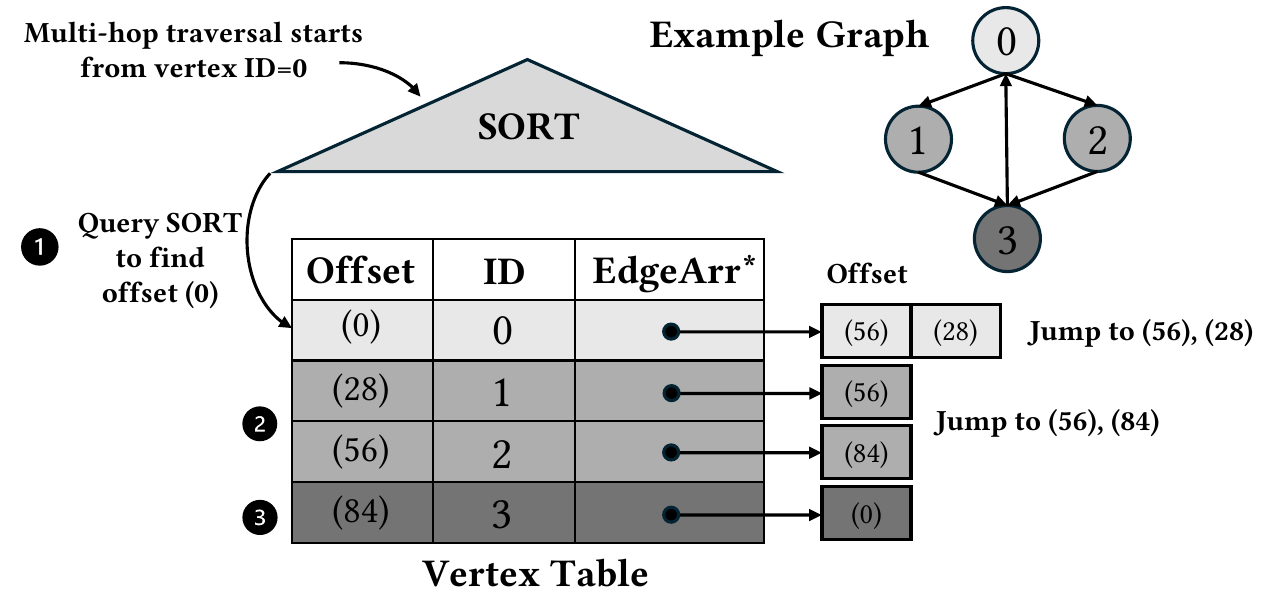}
    \caption{An example of traversing multiple hops starting from $ID=0$; Only step-\circnum{1} needs to query SORT.}
    \label{fig:logchain}
    \vspace{-1mm}
\end{figure}

\vspace{1mm}
\noindent\textbf{Edge chain.} Existing graph systems typically store the destination vertex IDs within their edge structures. Consequently, during graph traversals (e.g., multi-hop queries), each visited vertex must be looked up in the vertex index to retrieve its corresponding neighbor edges. For example, when performing a breadth-first search (BFS) \cite{adjlist} starting from vertex 0 on the graph shown in Figure~\ref{fig:logchain}, the system must perform four vertex index lookups—one for each of the visited vertices: 0, 1, 2, and 3.

In contrast, the edge array in RadixGraph stores vertex offsets instead of vertex IDs. By storing destination vertex's offset, the edges form a chain structure that directly represents the topological structure of the original graph, as shown in Figure \ref{fig:logchain}. This effectively creates a direct path from one vertex to its neighbors, allowing traversal to proceed seamlessly along this chain. For the same BFS process starting from vertex 0, RadixGraph only needs to perform a one-time lookup in the SORT to locate the starting vertex. Once the offset of the starting vertex is known, the traversal can continue through the edge chain by following the offsets embedded in the edge blocks without further vertex index accesses.
\begin{figure}[t]
    \includegraphics[width=0.45\textwidth]{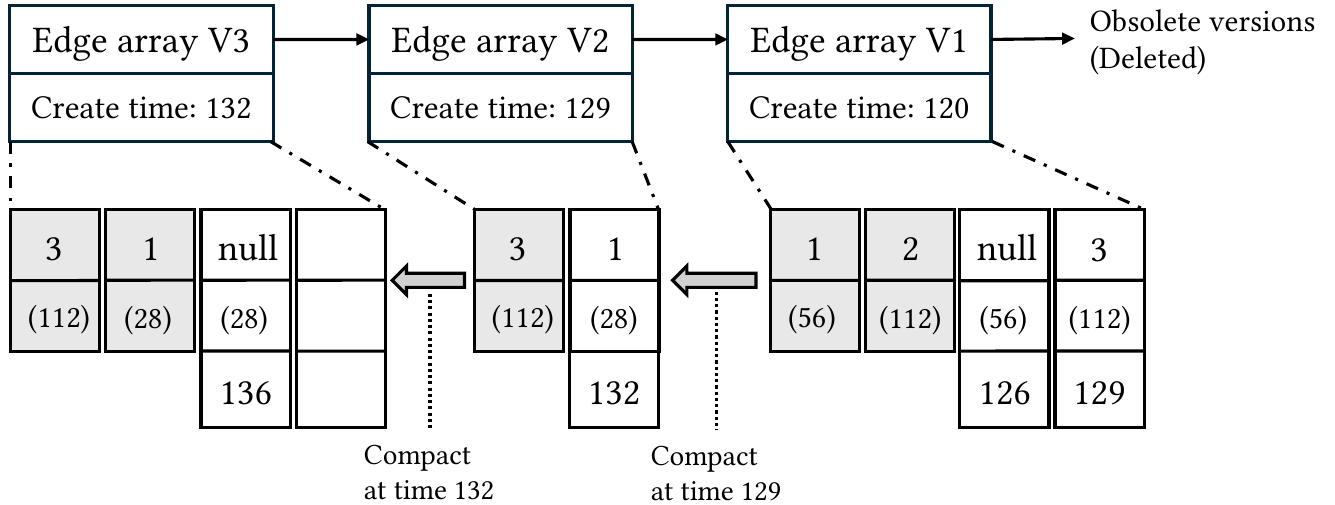}
    \caption{An example multi-versioned edge arrays.}
    \label{fig:versioning}
    \vspace{-1mm}
\end{figure}

\vspace{1mm}
\noindent\textbf{Version management.}
Many graph analytics workloads require access to a consistent snapshot of the graph. RadixGraph supports multi-version concurrency control (MVCC) \cite{mvcc} by associating timestamps with both the snapshot segment and each appended log block. Multiple versioned edge arrays are organized as a linked list, as illustrated in Figure \ref{fig:versioning}. A versioned array becomes eligible for deletion only after it has been compacted or its associated vertex has been deleted, and it is no longer accessed by any readers.  For a graph analytics workload timestamped $t$, the system first traverses the linked list to locate the most recent versioned array created before $t$, then filters the log segment to include only updates with timestamps less than or equal to $t$. This design allows multiple analytics tasks to operate concurrently on consistent, historical views of the graph.
\subsection{Discussions}
\label{sec:discuss}
\noindent\textbf{Operation complexity.} The fundamental operations in a dynamic graph system include (1) vertex-oriented operations: locate, insert or delete a vertex, (2) edge-oriented operations: insert, delete or update an edge, and (3) query operation: get neighbors of a vertex.

We compare the time complexities of RadixGraph on fundamental operations with recent state-of-the-art works in Table \ref{tab:complexity}. With $l$ layers in the SORT, all vertex-related operations can be performed in $O(l)$ time. Note that $l$ is a hyperparamter, and we set $l=O(lglg(u))$ to achieve the same complexity as Spruce's vertex index while minimizing its average space cost compared to Spruce. While GTX achieves better complexity in some vertex operations, it relies on a concurrent hashmap for indexing vertices, which becomes significantly slower as $n$ grows large. For edge-related operations, RadixGraph requires only $O(1)$ time complexity for inserting, updating, or deleting an edge  as shown in Theorem \ref{theorem:edge-complexity}. For get-neighbors operation, RadixGraph costs $O(d)$ by performing a process similar to the log compaction.

\vspace{1mm}
\noindent\textbf{Memory consumption.} Let $n$ denote the number of vertices and $m$ the number of edges. When vertex IDs are not excessively sparse, the worst-case space complexity of SORT grows linearly with $n$, say, $kn$ for some $k$. The empirical verification of this property is presented in Section~\ref{sec:case-study} ``SORT space consumption as $n$ increases'', while the detailed theoretical analysis is provided in Appendix C. The vertex table consumes at most $64n$ bytes of memory since each vertex block costs 32 bytes and the vertex table size is at most twice the number of vertices. For the edge arrays, each edge block occupies 8 bytes, so the total memory of the snapshot segments does not exceed $8m$ bytes. Since the log segment size is set equal to the snapshot segment size and each log block costs 12 bytes, the total memory of the log segments is bounded by $12m$ bytes. Each duplicate checker maintains a segment bitmap of $L\lceil\frac{n}{L}\rceil$ bits, requiring $L\lceil\frac{n}{L}\rceil/8$ bytes. Let $t$ be the maximum number of worker threads allowed for concurrent compactions (typically a constant such as 32 or 64), the total bitmap space is $t \cdot L\lceil\frac{n}{L}\rceil/8$ bytes. Therefore, the overall space required to store the graph is $O(kn+64n+8m+12m+tL\lceil\frac{n}{L}\rceil/8)=O(m)$.

\begin{table}[t]
    \centering
    \fontsize{7.8pt}{1.1em}\selectfont
    \setlength{\tabcolsep}{1.5mm}{
    \begin{tabular}{ccccc}
    \hline
    Operation & Teseo & Spruce & GTX & RadixGraph\\
    \hline
    locate\_v & $O(lg(u))$ & $O(lglg(u))$ & $O(1)$ & $O(lglg(u))$ \\
    insert\_v & $O(lg(u))$ & $O(lglg(u))$ & $O(1)$ & $O(lglg(u))$ \\
    delete\_v & $O(lg(u)+d)$ & $O(lglg(u)+d)$ & $O(d^2)$ & $O(lglg(u)+d)$ \\
    insert\_e & $O(lg^2(d))$ & $O(d)$ & $O(1)$ & $O(1)$ \\
    delete\_e & $O(lg^2(d))$ & $O(lg(d))$ & $O(d)$ & $O(1)$ \\
    update\_e & $O(lg^2(d))$ & $O(lg(d))$ & $O(d) $& $O(1)$ \\
    get\_ngbrs & $O(d)$ & $O(d)$ & $O(d)$ & $O(d)$ \\
    \hline
    \end{tabular}}
    \caption{Time complexity of fundamental operations. $n$ is the number of vertices, $u$ is the size of the range of vertex identifiers and $d$ is the average degree of vertices. ``get\_ngbrs'' represents the get-neighbors operation.}
    \label{tab:complexity}
    \vspace{-1mm}
\end{table}

\vspace{1mm}
\noindent\textbf{Concurrency control.} Concurrency control is another critical factor in the performance of a graph system. Our vertex index concurrency protocol builds on Read-Optimized Write EXclusion (ROWEX) \cite{rowex, spruce}. RadixGraph employs an atomic bitmap in each internal node of the SORT, where an $i$-layer node maintains a bitmap of size $2^{a_i}$. Each bit in the bitmap uniquely corresponds to a child node. When a new child node is created, the corresponding bit in its parent node is atomically set to 1 using an atomic compare-and-swap (CAS) operation \cite{10.5555/2385452}, indicating an ongoing creation process, and is reset to 0 once the creation completes. For readers, they simply read the data atomically. To support concurrent edge operations, RadixGraph maintains an atomic \textit{Size} field in its vertex block, and each log entry obtains a unique identifier by an atomic add operation on \textit{Size}. For readers, they can perform latch-free read on the edge array.

\vspace{1mm}
\noindent\textbf{Supporting graph types.} RadixGraph is primarily designed for directed weighted graphs. However, the ``Weight'' field in edge and log blocks can be replaced with other properties to support other types of graphs like \textit{temporal graphs} \cite{temporal_graph} and \textit{labelled graphs} \cite{labelled_graph} as long as a ``NULL'' value is reserved for the field. RadixGraph can also be easily extended to support \textit{multi-graphs}. In that case, each vertex block should maintain a counter to represent the number of edges inserted in its edge array, and each edge and log block should maintain an extra ``ID'' field. Upon inserting a new edge to that vertex, the edge gets a unique ID by incrementing the counter. During log compactions, edge and log blocks with the same IDs will be compacted and only the latest version is kept. If maintaining \textit{vertex labels} \cite{bonifati2018querying} with MVCC consistency is required, we can also extend the vertex block to store a pointer referring the versioned label chain of the vertex.
\section{Experiments}
\label{lab:exp}
\subsection{Setup}
\label{sec:setup}
\textbf{System environment.} We perform our experiments on a dual-socket machine equipped with Intel(R) Xeon(R) Gold 6326 CPU @ 2.90GHz processors and 250GB RAM. Each CPU has 48MB L3 Cache, 16 cores, and supports at most 32 threads. Unless otherwise specified, all experiments are executed with 64 threads running concurrently. All the codes were compiled with GCC 11.4.0 on Ubuntu Linux with O3 optimization.

\vspace{1mm}
\noindent\textbf{Graph data.} We use various graph datasets from SNAP \cite{snapnets} and LDBC graph analytics \cite{ldbc}, as shown in Table \ref{tab:datasets}. Two synthetic datasets include (1) \textit{g500-24}, power-law graphs, and (2) \textit{uni-24}, uniform-law graphs. These datasets provide diverse structural properties under different graph topologies. Four real-world datasets include (1) \textit{livejournal}, the LiveJournal social network \cite{livejournal}; (2) \textit{dota}, the relationships between game entities \cite{dota-league}; (3) \textit{orkut}, a relationship graph capturing ground-truth communities from the Orkut social network \cite{orkut}; and (4) \textit{twitter-2010}, Twitter follower network as of 2010 \cite{twitter2010}. Note that all the graphs are treated as undirected graphs, and each edge would be inserted in both directions.

\begin{table}[t]
    \fontsize{8pt}{1.15em}\selectfont
    \setlength{\tabcolsep}{1.6mm}{
    \begin{tabular}{l|c|c|c|c}
    \hline
    \textbf{Datasets} & \#\textbf{Vertices} & \#\textbf{Edges} & \textbf{Avg. Deg.} & \textbf{Max. Deg.}\\
    \hline\hline
    livejournal (lj) & 4M & 34M & 17.35 & 14815\\
    \hline
    dota & 61K & 51M & 1663.24 & 17004\\
    \hline
    orkut & 3M & 117M & 76.28 & 33313\\
    \hline
    g500-24 (g24) & 9M & 260M & 58.70 & 406416\\
    \hline
    uni-24 (u24) & 9M & 260M & 58.70 & 103\\
    \hline
    twitter-2010 (twitter) & 41M & 1.46B & 70.51 & 3081112\\
    \hline
    \end{tabular}
    }
    \caption{Datasets used in our experiments. ``K'', ``M'', ``B'' represent ``thousand'', ``million'' and ``billion''.}
    \label{tab:datasets}
    \vspace{-1mm}
\end{table}

\vspace{1mm}
\noindent\textbf{Graph benchmark.} Recently, some benchmarks are developed to evaluate in-memory dynamic graph systems \cite{10.1145/3709720,teseo,gapbs,ldbc}. We evaluate RadixGraph and its competitors based on GFE (Graph Framework Evaluation) driver \cite{teseo}, a C++ driver to evaluate updates and analytics in dynamic structural graphs. For graph analytics, we implement parallel $k$-hop neighbor queries and graph algorithms with GAPBS (GAP Benchmark Suite) \cite{gapbs}, including breadth-first search (BFS), single-source shortest paths (SSSP), PageRank (PR), weakly connected components (WCC), triangle counting (TC) and betweeness centraility (BC).

\vspace{1mm}
\noindent\textbf{Competitors.} We choose recent state-of-the-art in-memory dynamic graph systems for comparisons: Teseo \cite{teseo}, Sortledton \cite{sortledton}, Spruce \cite{spruce} and GTX \cite{gtx}, all implemented in C++ with parallelism support. We also compare RadixGraph with Terrace \cite{terrace}, Aspen \cite{aspen} and CPAM \cite{cpam}, which are optimized for batch updates. Note that some methods failed to complete on some datasets either due to exceptions or not finishing within 24 hours, and hence omitted from the figures. We also note that these baselines may have different levels of transactional supports which correlates with operational performance or memory consumption. For example, GTX has a stronger transactional guarantee with snapshot isolation and serializable transactional support.

\begin{figure*}[t]
    \centering
    \includegraphics[width=1.\linewidth]{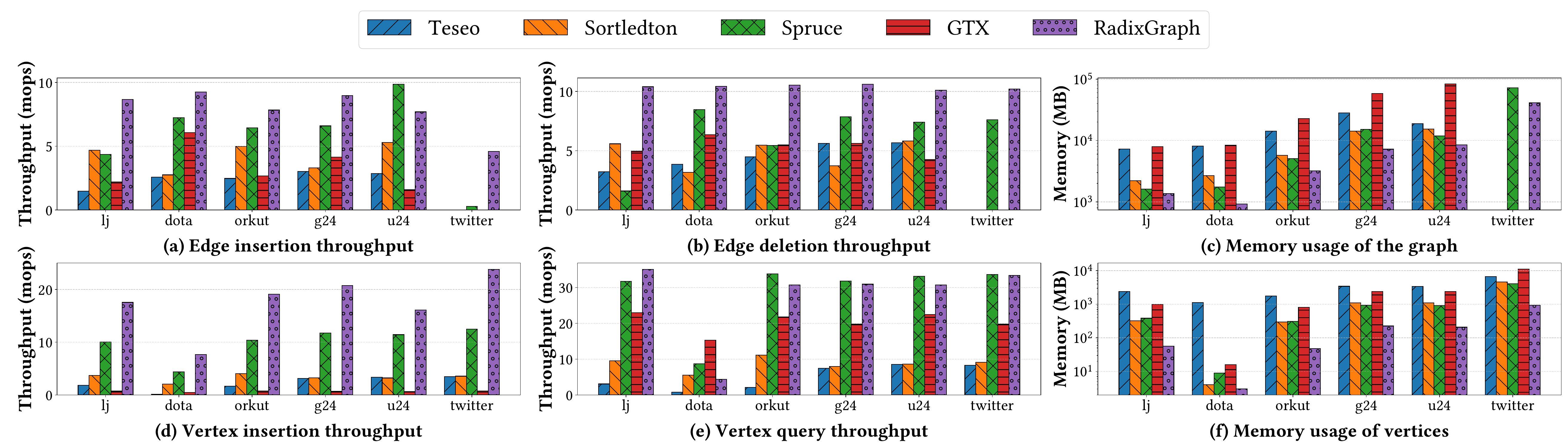}
    \caption{Throughput of edge and vertex operations and memory consumptions.}
    \label{fig:dynamic-ops}
    \vspace{-2mm}
\end{figure*}
\begin{figure}[t]
    \centering
    \includegraphics[width=1.\linewidth]{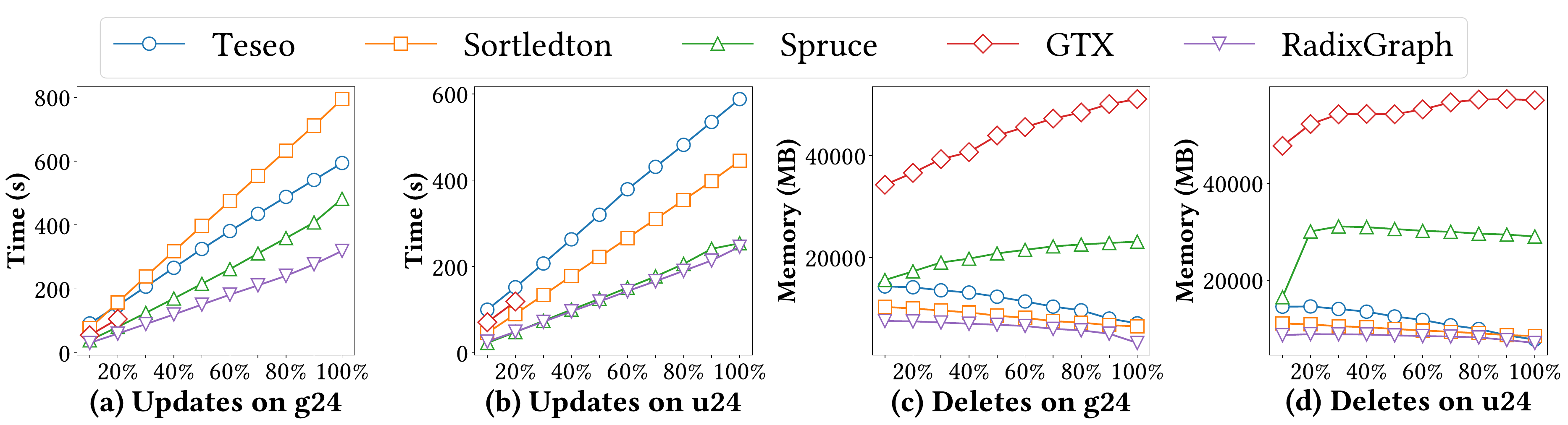}
    \caption{The time cost of mixed edge updates and memory footprint of large-scale edge deletions for all methods.}
    \label{fig:time-and-memory-footprint}
    \vspace{-1mm}
\end{figure}
\subsection{Dynamic operations}
\label{sec:dynamic}
\noindent\textbf{Edge insertion throughput.} Figure~\ref{fig:dynamic-ops}(a) shows the insertion throughput of all methods across all datasets. On the twitter dataset, Sortledton, Teseo and GTX fail to finish in our experimental environment and thus are omitted. In this experiment, edges are randomly permuted and inserted in a round-robin manner and vertices are inserted alongside if they do not exist in the system. RadixGraph outperforms all other systems on most datasets, achieving up to $16.27\times$ higher throughput than Spruce on the twitter dataset. Although Spruce performs best in the uniform graph (u24), its throughput degrades significantly on the twitter dataset. This is because Spruce appends neighbor edges to a fixed-size buffer, which is merged into a sorted array only when full. While this strategy yields near-constant insertion time when merges are infrequent, high-degree vertices in the twitter graph (up to 3 million neighbors) trigger frequent merges, resulting in degraded $O(d)$ insertion complexity. GTX employs a delta-chain index per vertex to enhance concurrency and provide transactional support. This design is beneficial for dense graphs like dota, where concurrent writes to the same vertex are common, allowing GTX to outperform Sortledton and Teseo, which rely on per-vertex latches. However, in sparser graphs with lower average degree, the delta-chain is underutilized, and transactional support inherently requires trade-offs with performance. Overall, RadixGraph generally has a good performance on different types of datasets thanks to its superior complexity, compact edge structure and latch-free log append process.

\vspace{1mm}
\noindent\textbf{Edge deletion throughput.} Figure~\ref{fig:dynamic-ops}(b) presents the deletion throughput of all methods across all graphs. This experiment directly follows the insertion benchmark, with edges deleted in the same round-robin order as they were inserted. We focus on evaluating the performance of the deletion phase in isolation. RadixGraph achieves the highest deletion throughput across all datasets, outperforming the second-best system by $1.23\times$-$1.93\times$ on the datasets, respectively. In general, we observe that most graph systems achieve higher throughput for deletions than for insertions. For Spruce, this is due to its deletion complexity being $O(lg(d))$, compared to its insertion complexity of $O(d)$. For the other systems, the primary reason is that edge deletions do not require writing edge properties, reducing the I/O and processing overhead.

\vspace{1mm}
\noindent\textbf{Vertex operations.} Figures~\ref{fig:dynamic-ops}(d) and~\ref{fig:dynamic-ops}(e) show the vertex insertion and query throughput of all methods. RadixGraph achieves the highest vertex insertion performance across all datasets, with Spruce being the second best. This advantage comes from the radix-tree-like structure, which avoids costly split, merge, or resize operations during updates. Compared with Spruce which employs a vEB-tree as its vertex index, RadixGraph benefits from its space-optimization model that minimizes memory consumption. The reduced space usage leads to fewer pointer allocations, thereby accelerating insertions. For vertex queries, RadixGraph and Spruce also outperform other methods in most cases, owing to their superior $O(lglg(u))$ query complexity. Although multi-level vector structures have a theoretical $O(1)$ query time, they require maintaining an external hash map for ID translation. Each query must first access the hash map before traversing multiple vector levels, and this complex design introduces overhead that ultimately degrades throughput. GTX has the unique benefit to support transactional and serializable graph operations; as a result, the additional coordination and validation required by its concurrency control mechanisms introduce extra overhead during vertex insertions, naturally leading to lower throughput even though it employs a similar vertex index as Sortledton’s.

\vspace{1mm}
\noindent\textbf{Mixed edge updates.} Figure~\ref{fig:time-and-memory-footprint}(a) and (b) show the update time footprint of all methods on synthetic datasets. This experiment follows the same setting as tested in the Teseo and Sortledton papers \cite{teseo,sortledton}. GTX encounters an OOM exception on both datasets when processing $20\%$ operations, and thus its curve contains only two data points. On g500-24, RadixGraph consistently outperforms the other methods. On uni-24, RadixGraph and Spruce exhibit similar performances for low vertex degrees. RadixGraph's curves show no latency spikes, demonstrating stable $O(1)$ amortized complexity and the log compactions do not significantly influence the efficiency.

\begin{figure*}[t]
    \centering
    \includegraphics[width=0.98\linewidth]{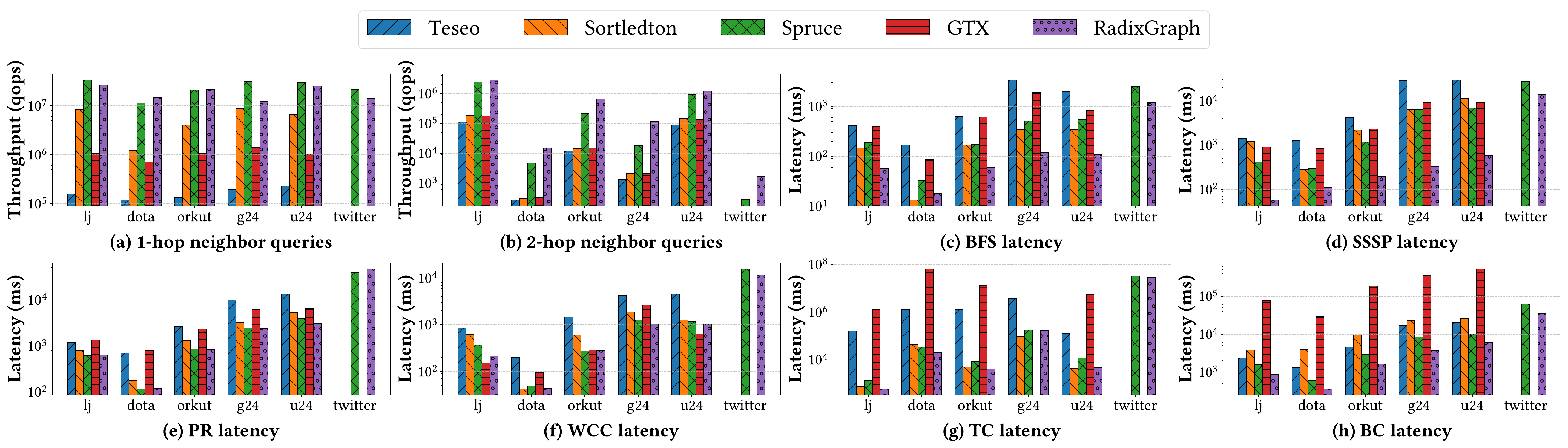}
\caption{Throughput of hop queries and latency of graph algorithms.}
\label{fig:graph-algorithms}
\vspace{-2mm}
\end{figure*}
\subsection{Memory consumption}
\label{sec:memory}
\textbf{Storing the whole graph.} Figure \ref{fig:dynamic-ops}(c) shows the physical memory usage of storing dynamic graphs by edge insertions across different methods. On the twitter dataset, only Spruce and RadixGraph construct the graph successfully so we only report their results. We measure the memory difference before and after graph construction. RadixGraph demonstrates the lowest memory consumption on all datasets, using an average of $40.1\%$ less space than the second-lowest memory consumption. Although Spruce also employs a radix-tree-based structure (i.e., vEB-tree) for its vertex index, its structure remains fixed regardless of graph size, whereas RadixGraph optimizes space efficiency through our proposed integer programming model. GTX consumes the largest space on most datasets, since it uses 64-byte edge deltas and a delta-chain index per vertex. This exchanges the memory consumption with better transactional supports.

\vspace{1mm}
\noindent\textbf{Storing vertices.} Figure~\ref{fig:dynamic-ops}(f) reports the physical memory usage for storing only the vertices of each dataset. RadixGraph achieves the lowest memory consumption across all datasets. In contrast, Teseo exhibits consistently high memory usage regardless of graph size, as it stores both vertices and edges within the leaf nodes of its ART. Consequently, inserting a vertex immediately triggers the allocation of edge segments, even when no edges exist. Although Spruce also employs a radix-tree-like structure as its vertex index, its space efficiency is only comparable to that of Sortledton and does not show significant advantages. This is because Spruce’s radix tree is not optimized for minimal space usage, unlike RadixGraph’s structure, which is tuned through its optimization model.

\vspace{1mm}
\noindent\textbf{Memory footprints of edge deletions.} Figures~\ref{fig:time-and-memory-footprint}(c) and (d) show the memory footprints during edge deletions on synthetic datasets. The memory usage of RadixGraph gradually decreases, with a more pronounced drop in the later stages of deletion. This behavior arises because in RadixGraph, the log segment size equals the snapshot segment size. Therefore, an edge array is compacted and recycled only after all its edges have been deleted (i.e., the log segment is fully filled). This observation suggests that, for delete-heavy workloads, reducing the log segment size relative to the snapshot segment size could enable more aggressive compaction and lower space overhead. Spruce initializes a linked list of versioned edges at the start of deletions and records each deleted edge in this list without employing any garbage collection mechanism. As a result, its memory footprint may even increase during edge deletions. GTX exhibits similar behavior, as its lazy garbage collection delays memory reclamation to reduce runtime overhead, resulting in temporary memory growth under delete-heavy workloads.

\begin{figure}[t]
    \centering
    \includegraphics[width=1.\linewidth]{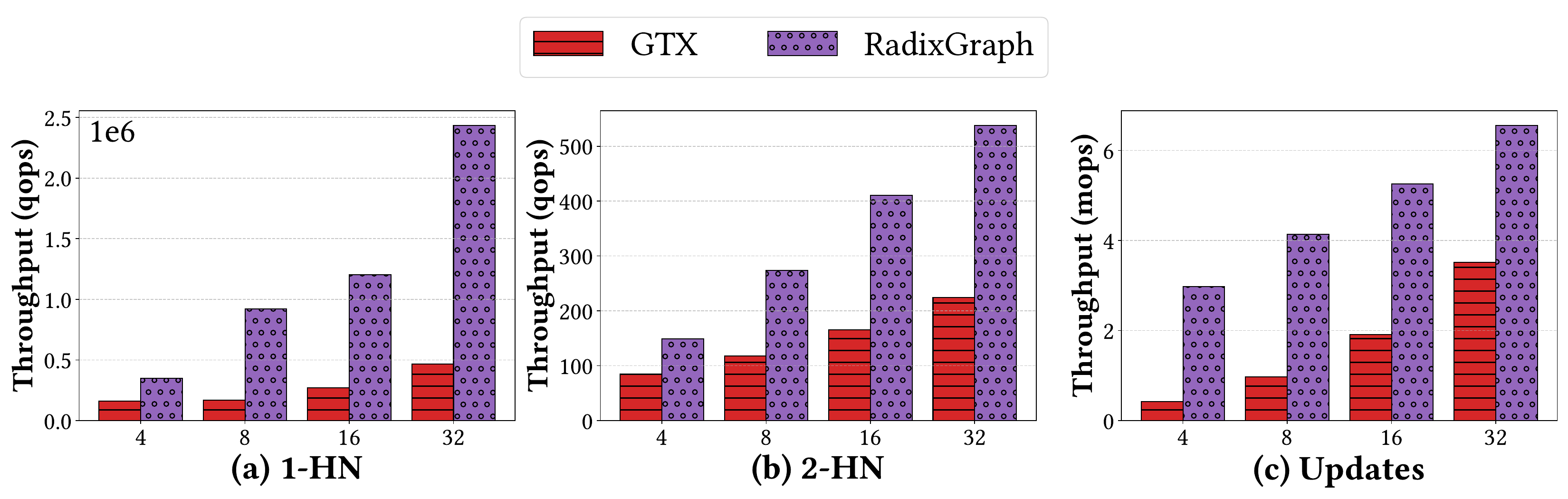}
    \caption{The throughput of concurrent reads and writes on \textit{dota} dataset over different number of threads. (a) refers to 1-hop neighbor query throughput; (b) refers to 2-hop neighbor query throughput; (c) refers to graph update throughput.}
    \label{fig:concurrent-throughput}
    \vspace{-1mm}
\end{figure}
\subsection{Graph analytics}
\label{sec:analytics}
In this subsection, we evaluate the efficiency of graph analytical tasks across different methods. Note that on the Twitter dataset, only Spruce and RadixGraph construct the graph and execute graph analytics successfully so we only report their results.

\vspace{1mm}
\noindent\textbf{Neighbor queries.} Figure~\ref{fig:graph-algorithms}(a) and (b) show the throughput of $k$-hop neighbor queries. The experiment first constructs the graph by inserting all edges, and then executes $k$-hop neighbor queries from every vertex in parallel. Our evaluation focuses on measuring the query throughput. For 1-hop queries, RadixGraph and Spruce achieve the highest performance due to their compact edge array designs, which enable efficient sequential scans for neighbor retrieval. For 2-hop queries, RadixGraph outperforms Spruce by up to $6.11\times$, owing to its edge chain structure that avoids redundant vertex lookups when traversing the neighbors of 1-hop neighbors. Note that the time complexity of a 2-hop query is $O(d^2)$, so its throughput decreases on dense and power-law graphs (e.g., dota, g24, twitter) compared to uniform-law graphs (e.g., u24).

\vspace{1mm}
\noindent\textbf{Graph algorithms.} We implement two parallel single-source traversal algorithms (BFS and SSSP) and four parallel iterative graph algorithms (PR, WCC, TC and BC) based on the GAP Benchmark Suite (GAPBS). Single-source traversal algorithms start from a given vertex and explore its reachable subgraph, while iterative algorithms perform repeated computations over the entire graph. Figures~\ref{fig:graph-algorithms}(c) and (d) present the performance of each method on BFS and SSSP using a logarithmic scale. RadixGraph significantly reduces BFS and SSSP execution time on most datasets. This improvement is primarily due to RadixGraph’s edge chain design, which enables direct traversal from a vertex without repetitive vertex index lookups. In contrast, prior methods require vertex index lookups at each step. Figures~\ref{fig:graph-algorithms}(e), (f), (g) and (h) show the performance of the systems on PR, WCC, TC and BC. The complexity of TC algorithm is much higher than other algorithms and GTX fails to finish within 24 hours on g24 dataset since its stronger transactional support degrades performance. For PR and WCC, although RadixGraph does not benefit from the same lookup avoidance in these algorithms (as they need to iterate the whole graph), it still delivers competitive performance across all datasets due to its compact edge array design. BC involves BFS executions in its process, so RadixGraph also benefits from the BFS speedup and outperforms other baselines. For TC, although it involves two-hop neighbor queries, the dominant cost lies in computing neighbor intersections rather than neighbor retrievals, so RadixGraph gains limited benefits in this case.
\subsection{Concurrent reads and writes}
\label{sec:concurrent}
Real-world workloads often involve concurrent reads and writes. RadixGraph supports MVCC that allows read transactions to proceed without blocking or conflicting with concurrent write transactions. To evaluate performance under such workloads, we generate update operations following the procedure described in the mixed updates experiment and execute them concurrently with neighbor queries. For comparison, we include GTX, a state-of-the-art transactional graph system that also supports MVCC. Other systems are excluded from this experiment, as they either encounter deadlocks or segmentation faults under concurrent workloads, as also reported in the GTX paper \cite{gtx}. Figure \ref{fig:concurrent-throughput}(a) and (b) show the concurrent read throughput for 1-hop and 2-hop neighbor queries, respectively, with a varying number of readers and fixed 32 writers. Figure \ref{fig:concurrent-throughput}(c) presents the concurrent write throughput when varying the number of writers while holding the number of readers constant at 32. RadixGraph shows strong scalability for both concurrent read efficiency and write efficiency.

\subsection{Case study of SORT}
\label{sec:case-study}
\noindent\textbf{Robustness of SORT over $n$.} We examine how the number of vertices $n$ affects the optimal fanout configuration and memory consumption of SORT. We fix $u = 2^{32}$, meaning vertex IDs are distributed within the range $[0, 2^{32} - 1]$, and set the number of layers in SORT to $l = lglg(u)= 5$. Figure~\ref{fig:fanout-evolve} shows how the optimal fanouts evolve as $n$ increases. We observe that the changes occur at exponentially spaced intervals of $n$, with only five configuration shifts between $10^5$ and $10^7$. This suggests that the fanout can remain fixed over a broad range of $n$ and only needs to be updated when $n$ grows by roughly an order of magnitude. Additionally, the fanout values $a_i$ vary only slightly across different $n$, indicating that the optimal configuration is stable and robust. For example, when $n$ changes from $50000$ to $300000$, the optimal configuration changes from $\left\{19,4,3,3,3\right\}$ to $\left\{20,3,3,3,3\right\}$. Figure~\ref{fig:memory-evolve} presents the memory usage across different $n$ values under various radix tree configurations. The ``Updated'' bars represent the most recently computed optimal configuration, while the ``Trailing'' bars reflect the previously optimal configuration that has not yet been updated. The results show that using the updated configuration yields approximately $5\%$ lower memory consumption than the trailing one. Nonetheless, the difference is modest, and both configurations consistently outperform the vEB-tree baseline.
\begin{figure}[t]
    \vspace{-3mm}
    \subfigure[SORT fanouts] {
        \includegraphics[width=0.24\linewidth]{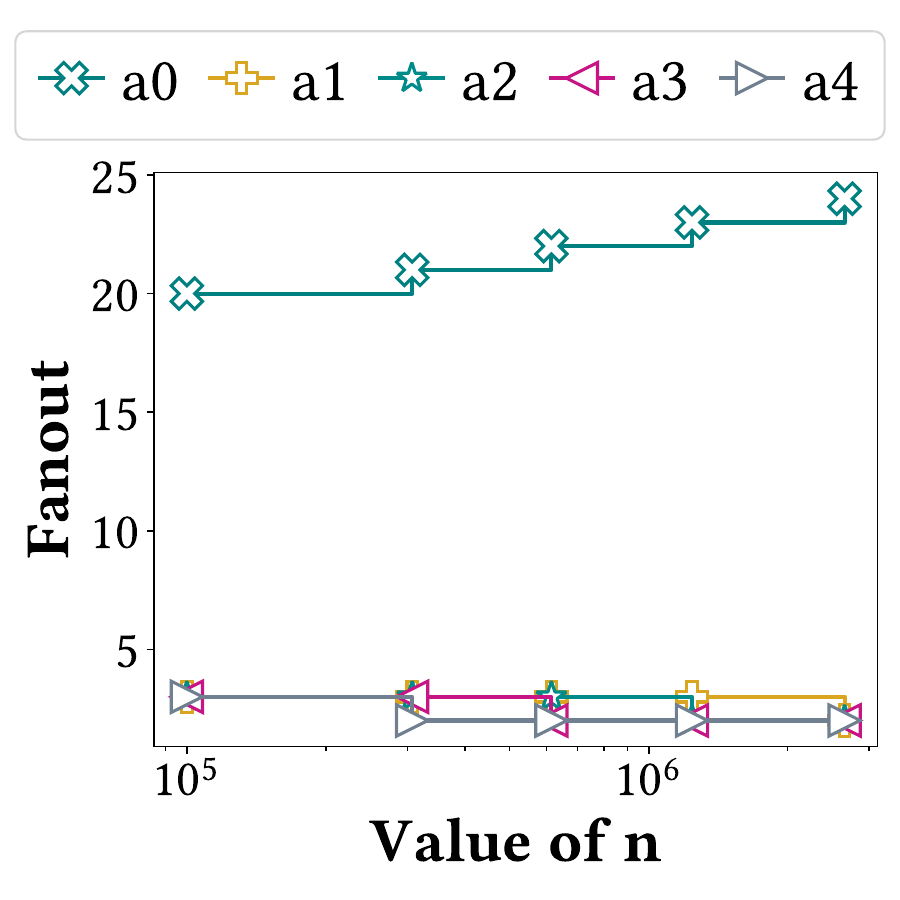}
        \label{fig:fanout-evolve}
    }
    \hspace{-2mm}
    \subfigure[Mem-costs] {
        \includegraphics[width=0.24\linewidth]{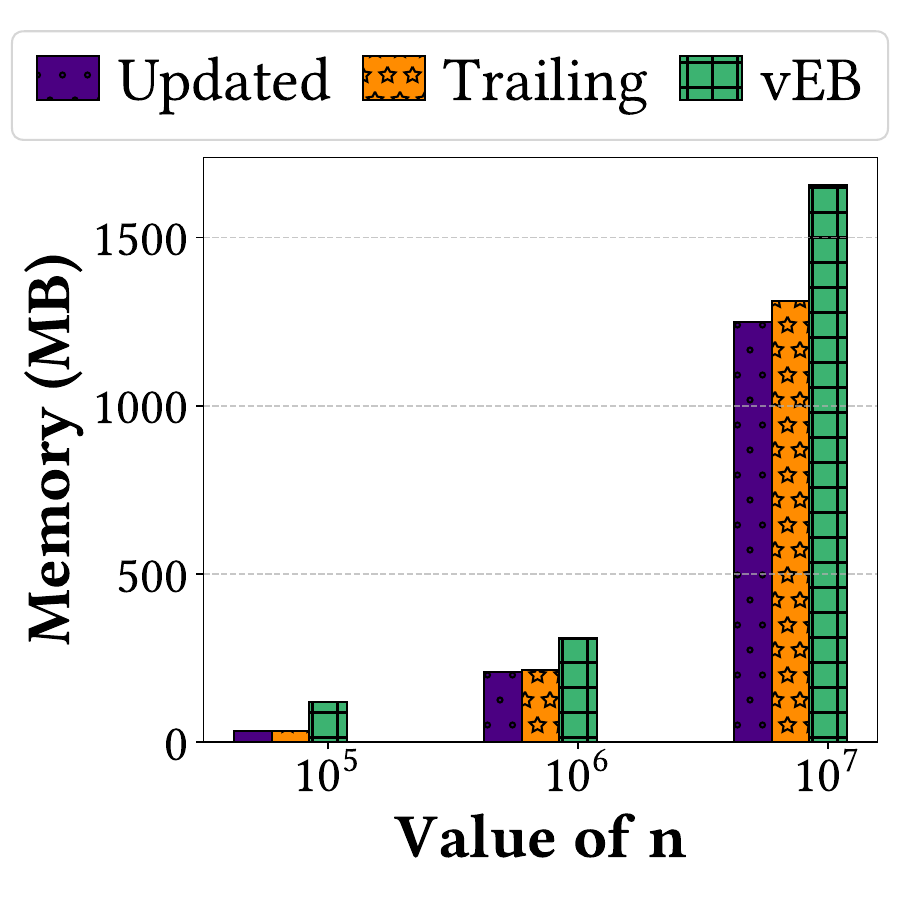}
        \label{fig:memory-evolve}
    }
    \hspace{-2mm}
    \subfigure[Mem-footprint]{
        \includegraphics[width=0.24\linewidth]{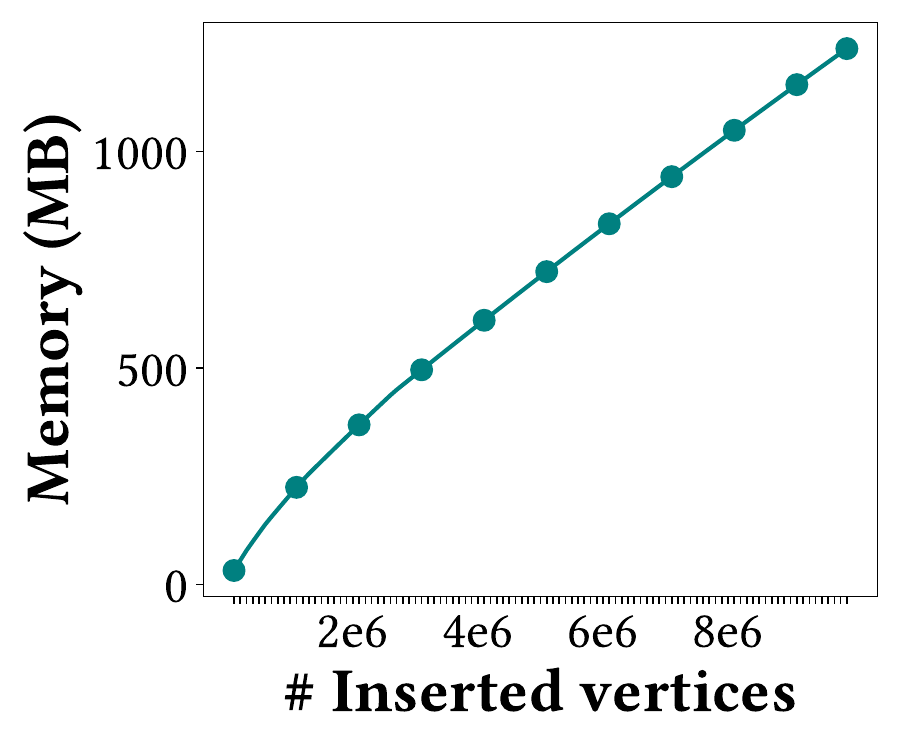}
    }
    \hspace{-2mm}
    \subfigure[Adapt-costs]{
    \begin{minipage}[b]{0.2\linewidth}
    \raisebox{10mm}{
    \fontsize{6pt}{0.9em}\selectfont
    \begin{tabular}{cc}
    \hline
    $n$ & Time\\
    \hline
    3.1E5 & 0.23s \\
    6.2E5 & 0.46s \\
    1.2E6 & 0.93s \\
    2.7E6 & 1.96s \\
    \hline
    \end{tabular}}
    \end{minipage}
    }
    \caption{The (a) optimal configurations, (b) memory consumptions for different $n$; and (c) the memory footprint, (d) the transformation costs  during insertions.}
    \label{fig:sort-evolve}
    \vspace{-1mm}
\end{figure}

\vspace{1mm}
\noindent\textbf{Transformation cost of SORT.} We evaluate the time overhead of continuously transforming SORT as the number of vertices $n$ increases from $10^5$ to $10^7$. Figure~\ref{fig:sort-evolve}(d) summarizes the results, where $n$ represents the exact point where the optimal configuration changes. As shown in the table, configuration updates occur approximately when $n$ doubles, and each transformation completes within a few seconds. This confirms that SORT’s reconfiguration overhead has negligible impact on overall system performance.


\vspace{1mm}
\noindent\textbf{SORT space consumption as $n$ increases.}
We further examine whether the space consumption of SORT scales linearly with the number of vertices $n$. This question is crucial, as radix trees generally lack a deterministic worst-case bound with respect to $n$. To investigate, we vary $n$ within the range $[10^5, 10^7]$, sampling 100 evenly spaced points, and record the corresponding memory usage. Figure~\ref{fig:sort-evolve}(c) demonstrates the results which show an approximately linear growth in memory consumption with $n$. This indicates that SORT’s practical space complexity is close to $O(n)$. We also provide a theoretical analysis of SORT’s worst-case space complexity, which confirms that its space remains near-linear in $n$ when vertex IDs are not excessively sparse. Due to space constraints, the detailed analysis is included in Appendix~C.

\begin{figure}[t]
    \centering
    \includegraphics[width=1.\linewidth]{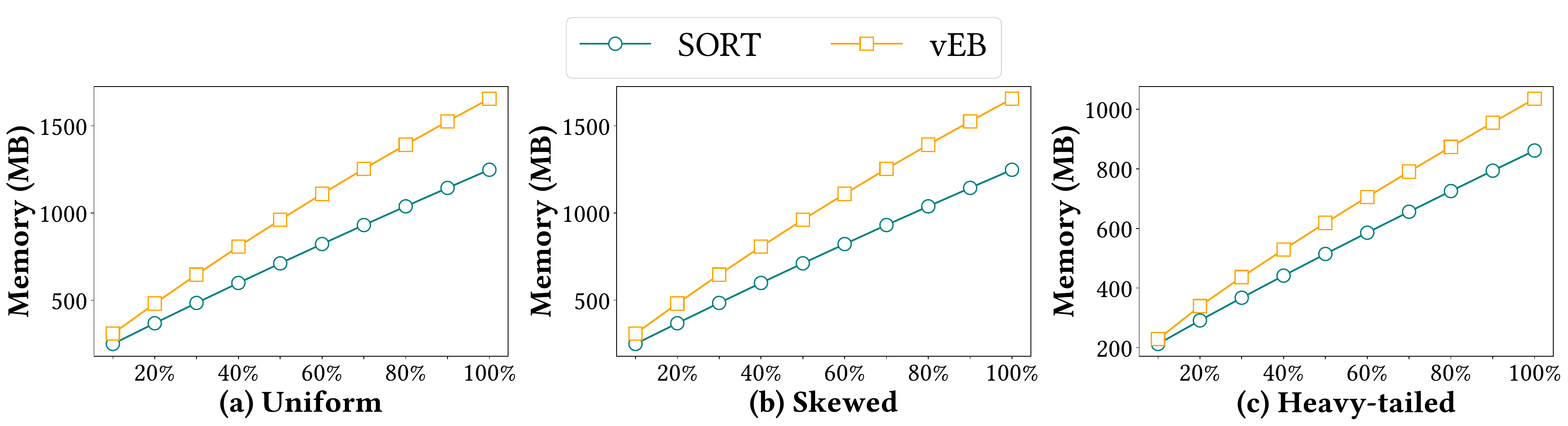}
    \caption{The memory footprints of inserting $10^7$ IDs into SORT and vEB-tree under different workloads.}
    \label{fig:workload-log}
    \vspace{-1mm}
\end{figure}
\vspace{1mm}
\noindent\textbf{SORT performance under non-uniform workloads.} We evaluate the memory efficiency of SORT and the vEB-tree under three types of workloads: (1) \textit{Uniform}: IDs are uniformly distributed within $[0, 2^{32}-1]$, which matches our theoretical analysis; (2) \textit{Skewed}: IDs are uniformly distributed within $[0, \frac{2^{32}-1}{1.5}]$, making the most significant bit more likely to be 0; (3) \textit{Heavy-tailed}: IDs follow a reciprocal distribution \cite{hamming1970distribution} over $[0, 2^{32}-1]$ (i.e., $p(i)\propto\frac{1}{i}$), where smaller IDs occur much more frequently. For each workload, we generate $10^7$ distinct IDs and insert them into SORT and vEB-tree, respectively. Figure~\ref{fig:workload-log} presents the memory footprints during insertions. SORT consistently outperforms the vEB-tree across all workloads, though the performance gains decrease as the distribution becomes more skewed.

\begin{table}[t]
\centering
\centering
\fontsize{6.3pt}{1em}\selectfont
\setlength{\tabcolsep}{0.7mm}{
\begin{tabular}{l|cc|cc|cc|cc|cc|cc}
\hline
\multirow{3}{*}{$n$} & \multicolumn{4}{c|}{Insertion} & \multicolumn{4}{c|}{Query} & \multicolumn{4}{c}{Memory (KB)} \\ \cline{2-13}
& \multicolumn{2}{c|}{$u=2^{24}$} & \multicolumn{2}{c|}{$u=2^{32}$} & \multicolumn{2}{c|}{$u=2^{24}$} & \multicolumn{2}{c|}{$u=2^{32}$} & \multicolumn{2}{c|}{$u=2^{24}$} & \multicolumn{2}{c}{$u=2^{32}$} \\ \cline{2-13}
& SORT & ART & SORT & ART & SORT & ART & SORT & ART & SORT & ART & SORT & ART \\ \hline
$10^4$ & \textbf{\underline{2.5E5}} & 1.7E4 & 6.0E4 & \textbf{\underline{1.7E5}} & \textbf{\underline{1.9E7}} & 1.2E6 & \textbf{\underline{6.6E6}} & 1.2E6 & 2.6E3 & \textbf{\underline{5.1E2}} & 3.1E3 & \textbf{\underline{5.1E2}} \\
$10^5$ & \textbf{\underline{8.6E5}} & 1.7E5 & \textbf{\underline{2.9E5}} & 1.8E5 & \textbf{\underline{1.2E8}} & 2.4E6 & \textbf{\underline{7.1E7}} & 2.5E6 & 4.5E4 & \textbf{\underline{1.6E4}} & 6.7E4 & \textbf{\underline{1.6E4}} \\
$10^6$ & \textbf{\underline{7.4E5}} & 1.5E5 & \textbf{\underline{5.0E5}} & 1.7E5  & \textbf{\underline{1.4E8}} & 3.5E6 & \textbf{\underline{7.0E7}} & 3.2E6 & \textbf{\underline{1.6E5}} & 2.7E5 & 6.4E5 & \textbf{\underline{2.0E5}} \\
$10^7$ & \textbf{\underline{2.6E6}} & 1.4E5 & \textbf{\underline{1.7E6}} & 1.1E5 & \textbf{\underline{7.7E8}} & 3.7E6 & \textbf{\underline{9.8E7}} & 4.2E6 & \textbf{\underline{1.6E5}} & 2.7E6 & \textbf{\underline{1.3E6}} & 2.9E6 \\
\hline
\end{tabular}}
\caption{Insertion, query throughput and memory consumption of SORT and ART under different $n$ and $u$.}
\label{tab:sort-art}
\vspace{-1mm}
\end{table}

\vspace{1mm}
\noindent\textbf{Comparing SORT and ART.} Table~\ref{tab:sort-art} presents the insertion, query throughput and memory consumption of SORT and ART under varying configurations, where $n$ denotes the number of inserted elements and $u$ represents the size of the ID universe. For ART, we adopt unodb\footnote{https://github.com/laurynas-biveinis/unodb}
, an implementation that follows the designs proposed in prior work for both sequential and parallel settings~\cite{ART,rowex}. We evaluate the parallel version. Overall, SORT achieves higher throughput and better scalability than ART. In terms of memory efficiency, ART performs better when the ID universe is sparse (large $\frac{u}{n}$), whereas SORT is more efficient when the universe is dense (small $\frac{u}{n}$). This is because in sparse universes, even an optimized SORT configuration leaves many array slots unoccupied. These observations suggest that incorporating adaptive strategies into SORT could further improve its efficiency under sparse conditions.

\subsection{Ablation study for RadixGraph}
\label{sec:ablation}
\begin{table}[t]
\centering
\centering
\fontsize{7.5pt}{1em}\selectfont
\setlength{\tabcolsep}{1.5mm}{
\begin{tabular}{l|ccc|cccc}
\hline
\multirow{2}{*}{\textbf{Graphs}} & \multicolumn{3}{c|}{\textbf{ART v.s. SORT}} & \multicolumn{4}{c}{\textbf{Slowdown w/o edge chain}} \\
\cline{2-8}
& \textbf{Insert} & \textbf{Delete} & \textbf{Memory} & \textbf{2-hop} & \textbf{BFS} & \textbf{SSSP} & \textbf{BC} \\
\hline
lj & $8.58\times$ & $32.4\times$ & $1.86\times$ & $1.78\times$ & $27.67\times$ & $5.90\times$ & $1.56\times$ \\
orkut & $18.89\times$ & $28.33\times$ & $1.24\times$ & $1.72\times$ & $1.36\times$ & $1.14\times$ & $1.37\times$ \\
dota & $55.91\times$ & $23.74\times$ & $1.01\times$ & $1.04\times$ & $1.01\times$ & $1.36\times$ & $1.33\times$ \\
g24 & $7.44\times$ & $5.50\times$ & $1.16\times$ & $1.05\times$ & $2.33\times$ & $1.00\times$ & $1.37\times$ \\
u24 & $10.99\times$ & $30.72\times$ & $1.15\times$ & $2.10\times$ & $1.43\times$ & $1.19\times$ & $1.27\times$ \\
twitter & $12.81\times$ & $1.91\times$ & $1.30\times$ & $1.32\times$ & $1.73\times$ & $1.07\times$ & $1.12\times$ \\
\hline
\end{tabular}}
\caption{Ablation study of RadixGraph. Left part is the relative time and memory costs of ART compared with SORT.}
\label{tab:ablation}
\vspace{-1mm}
\end{table}
\begin{table*}[t]
\centering
\fontsize{7.5pt}{1.2em}\selectfont
\setlength{\tabcolsep}{4pt}
\begin{tabular}{cc|cccc|cccc|cccc}
\hline
 & \multirow{2}{*}{\textbf{Batch}} & 
\multicolumn{4}{c|}{\textbf{LJ}} & 
\multicolumn{4}{c|}{\textbf{Orkut}} & 
\multicolumn{4}{c}{\textbf{Twitter}}\\
\cline{3-14}
& & \textbf{RadixGraph} & \textbf{Terrace} & \textbf{Aspen} & \textbf{CPAM} &
   \textbf{RadixGraph} & \textbf{Terrace} & \textbf{Aspen} & \textbf{CPAM} &
   \textbf{RadixGraph} & \textbf{Terrace} & \textbf{Aspen} & \textbf{CPAM}\\
\hline
 & $10$ & \textbf{\underline{6.58E5}} & 1.42E5 & 1.03E5 & 8.54E4
    & \textbf{\underline{4.89E5}} & 8.56E4 & 1.21E5 & 8.92E4
    & \textbf{\underline{1.09E5}} & 3.15E4 & 9.54E4 & 8.92E4\\
\textbf{Insertion} & $10^2$ & \textbf{\underline{4.47E6}} & 2.79E5 & 6.37E5 & 4.17E5
    & \textbf{\underline{7.23E6}} & 2.18E5 & 9.97E5 & 4.48E5
    & \textbf{\underline{9.34E5}} & 2.80E5 & \textbf{\underline{9.34E5}} & 4.86E5\\
\textbf{throughput} & $10^3$ & \textbf{\underline{3.36E7}} & 1.61E6 & 3.22E6 & 8.55E5
    & \textbf{\underline{3.38E7}} & 5.68E5 & 2.80E6 & 7.60E5
    & 1.32E6 & 9.87E5 & \textbf{\underline{3.45E6}} & 6.10E5\\
 & $10^4$ & \textbf{\underline{6.18E7}} & 2.74E6 & 6.10E6 & 2.61E6
    & \textbf{\underline{5.69E7}} & 2.67E6 & 6.57E6 & 1.99E6
    & \textbf{\underline{3.64E6}} & 2.45E6 & \textit{Seg fault} & 6.60E5\\
\hline
 & $10$ & \textbf{\underline{2.75E5}} & 2.29E5 & 9.57E4 & 8.32E4
    & \textbf{\underline{5.07E5}} & 2.55E5 & 1.27E5 & 9.10E4
    & \textbf{\underline{1.18E5}} & 1.09E5 & 1.02E5 & 8.85E4\\
\textbf{Deletion} & $10^2$ & \textbf{\underline{5.04E6}} & 1.01E6 & 6.09E5 & 4.33E5
    & \textbf{\underline{8.08E6}} & 6.10E5 & 9.55E5 & 4.61E5
    & \textbf{\underline{2.22E6}} & 7.20E5 & 9.77E5 & 5.29E5\\
\textbf{throughput} & $10^3$ & \textbf{\underline{6.10E7}} & 1.48E6 & 3.43E6 & 8.52E5
    & \textbf{\underline{7.07E7}} & 9.18E5 & 2.86E6 & 7.79E5
    & \textbf{\underline{1.96E7}} & 1.91E6 & 3.15E6 & 6.19E5\\
 & $10^4$ & \textbf{\underline{2.01E8}} & 2.81E6 & 6.92E6 & 2.71E6
    & \textbf{\underline{5.76E7}} & 1.41E6 & 8.07E6 & 2.09E6
    & \textbf{\underline{9.65E7}} & 5.44E6 & \textit{Seg fault} & 6.67E5\\
\hline
\textbf{Memory} & / & \textbf{\underline{0.77G}} & 1.51G & 3.52G & 2.57G &
\textbf{\underline{2.93G}} & 4.30G & 4.64G & 6.98G &
\textbf{\underline{26.25G}} & 47.06G & 66.78G & 69.60G\\
\hline
\end{tabular}
\caption{Throughput for inserting and deleting edges with varying batch sizes in the LJ, Orkut and Twitter graphs and memory consumptions of different methods.}
\label{tab:throughput}
\vspace{-2mm}
\end{table*}
\noindent\textbf{Swapping SORT with ART.}
To further compare SORT and ART, we integrate unodb (the ART implementation) as an alternative vertex index for RadixGraph. Then we evaluate and compare RadixGraph using ART and SORT to assess their relative performance. Table \ref{tab:ablation} presents the results and we summarize several insights:
\begin{itemize}[leftmargin=*]
\item \textbf{ART is memory-efficient and competitive with SORT.}
Across all datasets, the ART-based RadixGraph consumes slightly more memory than the SORT-based version. The four adpative node types: \textit{Node4}, \textit{Node16}, \textit{Node48}, and \textit{Node256}, corresponding to fixed child array sizes of 4, 16, 48, and 256, allow ART to reduce wasted space. However, the coarse granularity can still lead to memory overhead. For instance, a node with 49 children must upgrade to a \textit{Node256}, allocating space for 256 entries while storing only 49, thereby wasting memory. Although finer granularity (e.g., introducing more node types) could mitigate this issue, it would also increase transformation overhead due to more frequent node-type conversions. Nevertheless, on most datasets, the memory usage of the ART-based RadixGraph remains close to that of the SORT-based version, indicating that ART’s space efficiency is still competitive overall.
\item \textbf{Query costs of ART are higher than SORT, resulting in lower throughput.}
The ART-based RadixGraph exhibits significantly higher insertion and deletion times than the SORT-based version across all datasets (Table~\ref{tab:ablation}), indicating that both insertion and query operations in ART incur overhead. During insertions, ART often needs to resize its internal arrays when a node changes type (e.g., from \textit{Node4} to \textit{Node16}), which involves memory reallocation and data copying. For queries, ART performs a sequential scan (for \textit{Node4} and \textit{Node16}) within each node to locate the corresponding child pointer. These factors collectively lead to slower update and query performance compared with SORT. However, the overall slowdown ratio of ART is much more than the ``Insertion'' part of Table \ref{tab:sort-art}, and closer to the ``Query`` part. The reason is that in graph systems, edge updates occur much more frequently than vertex insertions. Therefore, a query-efficient vertex index can significantly improve the overall throughput.
\end{itemize}

\vspace{1mm}
\noindent\textbf{Graph analytics without edge chain.}
We evaluate the impact of disabling the edge chain structure in RadixGraph to assess its contribution to graph analytics performance. For 1-hop queries as well as PR, WCC, and LCC, the performance difference is negligible, as these tasks are largely insensitive to vertex-index lookup costs. For other analytical workloads, the observed slowdowns are summarized in Table~\ref{tab:ablation}. The results show that the edge chain generally improves performance.

\subsection{Batch updates}
\label{sec:compare-terrace}
Another line of edge storage research focuses on optimizing batch operations~\cite{aspen,cpam,terrace}, whereas RadixGraph is primarily designed for single-edge operations. In this subsection, we compare RadixGraph with these systems in terms of end-to-end efficiency and memory consumption under varying batch sizes to highlight the advantages of RadixGraph’s snapshot-log edge structure.

Table~\ref{tab:throughput} summarizes the results. We observe that RadixGraph scales effectively as the batch size increases, even though RadixGraph is not  explicitly optimized for batch processing. This scalability arises because RadixGraph’s $O(1)$ edge operation cost is independent of vertex degree, and larger batch sizes naturally improve cache locality. On the Twitter dataset, Aspen exhibits competitive performance for batch sizes $10^2$ and $10^3$, but fails with a segmentation fault at batch size $10^4$. In contrast, RadixGraph  outperforms other systems over most datasets and batch sizes. RadixGraph also demonstrates lowest memory consumption towards all datasets.

We note that single-edge and batch operations correspond to different real-world scenarios: single-edge operations are common in streaming or highly dynamic graphs, while batch operations are typical in bulk graph updates or offline analytics.

\section{Related works}
\label{sec:related}
\noindent\textbf{Indices for vertices and edges.} Various in-memory graph systems have been developed for dynamic graph storage and processing \cite{aspen,cpam,risgraph,sortledton,graphone,10.5555/2387880.2387884,teseo,llama,terrace,pma,livegraph,spruce,gastcoco,stinger}. Different indexing techniques have been explored for both vertex and edge management. Hash tables provide efficient lookups by mapping a large key space to a smaller one \cite{cuckoo,horton,10.14778/2850583.2850585}. Coupled with hashmaps, the multi-level vector stores vertices based on their logical IDs from 0 to $n-1$ \cite{sortledton,gtx}. B+ trees support logarithmic time operations \cite{bftree,10.14778/2752939.2752947,10.14778/2095686.2095688} and balanced binary search trees (BSTs) maintain elements in order and can be viewed as the binary equivalent of B-trees \cite{10.1145/1734714.1734731,10.1007/s00236-023-00452-6,10.1145/2689412}. Radix trees offer near-constant time operations and eliminate the need for complex splitting and merging \cite{trie,PATRICIA,ART}. Recent works have further optimized radix-based structures for dynamic graphs. Teseo \cite{teseo} employs an Adaptive Radix Tree (ART) \cite{ART} as its primary index, while Spruce \cite{spruce} adopts the van Emde Boas tree (vEB-tree) \cite{vEB} for vertex indexing.

\vspace{1mm}
\noindent\textbf{Graph databases.} Beyond in-memory dynamic graph systems, another approach focuses on developing graph databases that leverage external memory for storage. Neo4j \cite{neo4j2024} and OrientDB \cite{orientdb} use linked list-based storage, where each vertex maintains its adjacency list as a linked list. SQLG \cite{sqlg} adopts a relational model, storing vertices and edges as tables within a relational database. More recent efforts, such as Aster \cite{aster} and LSMGraph \cite{lsmgraph}, integrate Log-Structured Merge (LSM) trees \cite{lsm} with graph data structures to achieve high write throughput while maintaining efficient query performance. An industrial database, BG3 \cite{bg3}, adopt the Bw-tree \cite{bwtree} instead of LSM-tree to utilize cheap cloud storage and minimize costs. These databases are designed for persistent storage and can serve as backends for in-memory dynamic graph systems.
\section{Conclusion}
We presented RadixGraph, a fast and space-efficient data structure for in-memory dynamic graph systems. RadixGraph combines a space-optimized vertex index (SORT) with a compact edge structure that supports $O(1)$ dynamic edge operations while ensuring fast query performance. A potential direction for future work includes extending RadixGraph for transactional support, whose importance has been noted in GTX \cite{gtx}. While RadixGraph already supports multi-version concurrency control (MVCC), how to provide stronger transactional guarantees on top of the existing design remains an open problem. In particular, supporting stronger isolation levels would require coordinating the visibility and ordering of updates that span multiple vertex-local edge arrays (for example, a transaction involving multiple operations). This requires shared commit timestamps, detecting write-write conflicts, and other mechanisms to ensure consistent and correct transaction execution.
\section{Acknowledgement}
This research is supported by the National Research Foundation, Singapore under its Frontier CRP Grant (NRF-F-CRP-2024-0005), and NTU SUG-NAP (022029-00001). Any opinions, findings, and conclusions or recommendations expressed in this material are those of the author(s) and do not reflect the views of the National Research Foundation, Singapore.

\bibliographystyle{ACM-Reference-Format}
\balance
\bibliography{reference}
\clearpage
\appendix
\section{Proofs of Lemma \ref{lemma:optimality} and Theorem \ref{theorem:optimality}}
\textbf{Proof of Lemma \ref{lemma:optimality}.} Since $s_{l-1}\geq x$, we assume there exists a feasible solution $s_0',s_1',\cdots,s_{l-1}'$, where $s_{l-1}'>x$ such that the objective function is minimized. If $s'_{l-2}\leq x$, it is not hard to see by replacing $s'_{l-1}$ with $x$, then $(s_0',s_1',\cdots,x)$ is also a feasible solution such that:
$$f(s_0',s_1',\cdots,x)\leq f(s_0',s_1',\cdots,s_{l-1}')$$

If $s'_{l-2}>x$, we can verify $s'_{l-3}$ via the same process, until some $s_i'\leq x$. Then it is not hard to see:
$$f(s_0',s_1',\cdots,s_i',x,\cdots,x)\leq f(s_0',s_1',\cdots,s_{l-1}')$$

Therefore, we can always find an optimal solution such that $s_{l-1}^*=x$.

\noindent\textbf{Proof of Theorem \ref{theorem:optimality}.} Since $g(i,j)$ represents the optimal summation of the first $i$ terms in $f$ when $s_i=j$, $g(l-1,x)$ represents the optimal value of $f$ when $s_{l-1}=x$. By applying Lemma \ref{lemma:optimality}, $g(l-1,x)$ gives an optimal value of $f$.

\section{Proof of Theorem \ref{theorem:edge-complexity}}
Let $s_l$ be the size of the snapshot segment. Since the snapshot segment size equals the log segment size, the whole edge array is of length $s=2s_l$. Therefore, Algorithm \ref{alg:log-compaction} costs $O(s)=O(s_l)$ time since each bitmap operation costs $O(1)$ time.

W.L.O.G., now assume there are $m$ edge operations with $c$ compactions, where the size of the snapshot segment changes from $S_0,S_1,S_2,\cdots, S_c$. Denote $INS_i,DEL_i,UPD_i$ as the number of insertion, deletion and update operations during the $i$-th and $(i+1)$-th compactions. Since the log segment is of the same size of the snapshot segment, there are $INS_i+DEL_i+UPD_i=S_i$ operations between the $i$-th and $(i+1)$-th compaction. Since the $i$-th compaction costs $O(S_{i-1})$ time, the total time complexity of compactions is $O\left(\sum_{i=1}^cS_{i-1}\right)=O(\sum_{i=0}^{c-1}(INS_i+UPD_i+DEL_i))=O(m)$. Therefore, the amortized time complexity for each operation is $O(1)$.

\section{Worst-case analysis of SORT}
Our main analysis guarantees the optimality of the \emph{average} space complexity of SORT.
We now refine the worst-case analysis and show that, when vertex IDs are not excessively sparse,
the \emph{worst-case} space complexity of SORT is near \(O(n)\).

Let \(V=\{v_1,v_2,\dots,v_n\}\) be the set of vertices with strictly increasing IDs \(v_i<v_{i+1}\).
Define the bit-length of the largest identifier by $x=\lceil lg(v_n)\rceil$
and the maximum adjacent-ID gap by $g=\max_{1<i\le n}(v_i-v_{i-1})$. We analyze the space usage layer by layer. Recall that an internal node at layer \(i\) (for \(0\le i<l\))
contains an array of \(2^{a_i}\) child pointers. Layers with \(a_i=0\) contain no pointer arrays and can be safely pruned,
so in what follows we consider only layers with \(a_i\ge 1\).

The root (layer \(0\)) contributes \(O(2^{a_0})\) space for its pointer array.
Fix an index \(i\ge 1\). Each node at layer \(i\) represents a contiguous ID interval of size $S_i= 2^{a_i+a_{i+1}+\cdots+a_{l-1}}$. If such a node exists, then there is at least one actual vertex whose ID is in this interval. As adjacent IDs are separated by at most \(g\), each occupied interval of length \(S_i\) must contain at least $n_i=\max\{1,\; S_i/g\}$ vertices. Hence, the number of created nodes at layer \(i\) is at most \(n/n_i\), and the total space
for layer \(i\) is bounded by
\begin{equation}\label{eq:layer-i}
O\left(2^{a_i}\cdot\frac{n}{n_i}\right)=O\left(2^{a_i}\cdot n\cdot\min\left\{1,\frac{g}{S_i}\right\}\right)
\end{equation}

We split the layers into two types:
\begin{itemize}[leftmargin=*]
  \item \emph{Shallow layers} where \(S_i\ge g\). For these layers \(n_i=S_i/g\), and \eqref{eq:layer-i} becomes $O\left(2^{a_i}\cdot n\cdot\frac{g}{S_i}\right)=O\left(\frac{n g}{2^{a_{i+1}+\cdots+a_{l-1}}}\right)$.
  \item \emph{Deep layers} where \(S_i<g\). For each such layer the contribution is at most \(O(n\cdot 2^{a_i})\).
\end{itemize}

It is straightforward to check that there exists a partition index $j$, such that layers $i$ for $i<j$ are shallow layers and $i\geq j$ are deep layers, respectively.

Consider the contributions from the shallow layers. Since $a_k\geq 1$ for any $k$, as \(i\) increases the quantity
\(a_{i+1}+a_{i+2}+\cdots+a_{l-1}\) decreases by at least \(1\) at each step, so the denominators
\(2^{a_{i+1}+\cdots+a_{l-1}}\) form at least a geometric progression with ratio at least \(1/2\).
Consequently the deep-layer contributions are:
\[
\sum_{\text{shallow } i} \frac{n g}{2^{a_{i+1}+\cdots+a_{l-1}}}= O\left(ng\cdot\alpha\right),
\]
where $\alpha=\sum_{\text{shallow }i}\frac{1}{2^{a_{i+1}+\cdots+a_{l-1}}}\ll 1$ is the sum of the geometric series. It can only converge to $1$ when $l\to\infty$ and $a_{0}=a_1=\cdots=a_{l-1}=1$ but otherwise is far less than 1.

For deep layers, since at the partition index $j$ we have $S_j<g$, their contributions are:
$$\sum_{\text{deep } i}n\cdot 2^{a_i}\leq n\cdot S_j<ng=O(ng)$$
Note that $\sum_{\text{deep } i}2^{a_i}\leq \prod_{\text{deep }i}2^{a_i}=S_j$ since $a_i\geq 1$. The proof is straightforward by induction and therefore omitted.

Putting everything together, the total space  of the SORT satisfies:
$$\text{Space}=O(2^{a_0}+ng\cdot\alpha+ng)$$

When the ID gap \(g\) is small or is a constant (the typical case when identifiers are dense or roughly
uniformly spaced), the space complexity is bounded by:
$$\text{Space}=O(2^{a_0}+n)$$

Since $v_n<ng$, we have $a_0\leq x=\lceil lg(v_n)\rceil=O(lg(ng))$ and therefore $2^{a_0}\leq ng$. However, this worst-case bound is achieved only when $a_0\approx x$ but $a_0$ is usually much less than $x$. The case $a_0\approx x$ occurs only when vertex IDs are excessively dense (e.g., $g=2$ or $g=3$) or $l=1$. In that case, $g$ can also be viewed as a constant and $\text{Space}=O(n)$.

\section{An illustration of the optimization process}
\begin{figure}[t]
    \includegraphics[width=0.45\textwidth]{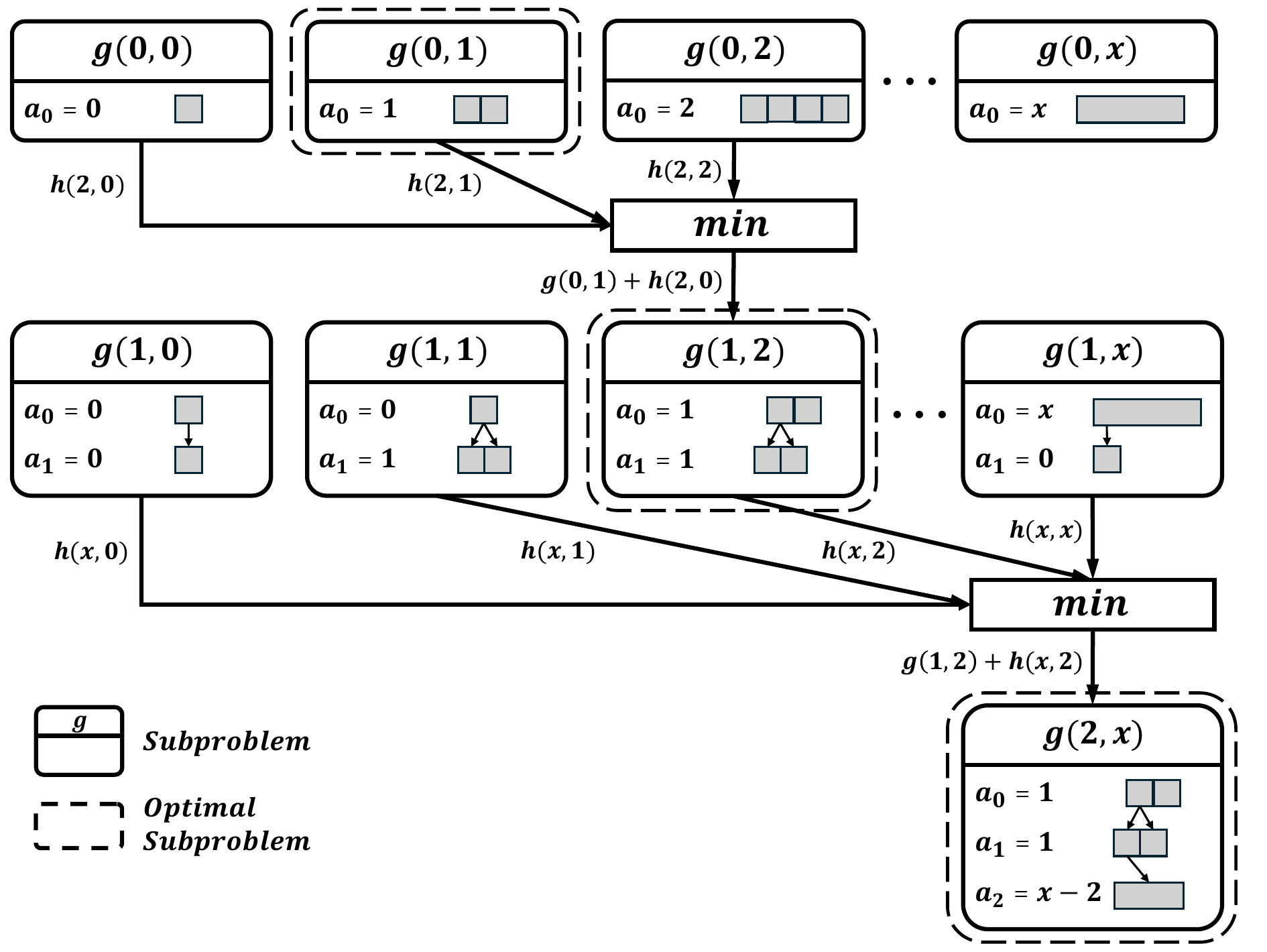}
    \caption{An example dynamic programming process when $l=3$. Here $h(j,k)=2^j\left(1-\frac{(2^x-n)!(2^x-2^{x-k})!}{(2^x)!(2^x-2^{x-k}-n)!}\right)$ is the transition cost from $g(i-1,k)$ to $g(i,j)$ in Equation \ref{eq:transition}.}
    \label{fig:dp}
\end{figure}
Figure \ref{fig:dp} shows an example dynamic programming process when 
$l=2$. By Theorem \ref{theorem:optimality}, the optimal solution is $s_0^*=1,s_1^*=2$ and $s_2^*=x$, which corresponds to an optimized radix tree setting: $a_0^*=1,a_1^*=1,a_2^*=x-2$.

\section{Time and space complexities of the optimization model}
Following Section \ref{sec:optimize-SORT}, we provide an analysis of the time and space consumptions to compute the optimal scheme of $g(l-1,x)$.
\begin{theorem}
    \label{theorem:time-complexity-sort}
    The time complexity of computing $g(l-1,x)$ is $O(nlx^2)$ and the space complexity is $O(lx)$.
\end{theorem}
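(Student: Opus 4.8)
The plan is to bound separately (i) the number of elementary transitions performed by the bottom-up dynamic program that fills the table $g$ via the recurrence in Equation~\ref{eq:transition}, and (ii) the cost of evaluating a single transition term, and then combine the two. First I would pin down the loop structure used in the implementation: the base row $g(0,j)=2^{j}$ is filled for $j=0,\ldots,x$ in $O(x)$ time; then the outer loop ranges over layers $i=1,\ldots,l-1$, the middle loop over target states $j=0,\ldots,x$ (the candidate values of $s_i$), and the inner loop over predecessor states $k=0,\ldots,j$ (the candidate values of $s_{i-1}$). Counting (state, predecessor) pairs gives $\sum_{i=1}^{l-1}\sum_{j=0}^{x}(j+1)=O(lx^{2})$ transitions, so it remains to charge the work per transition.

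Next I would analyze the per-transition cost, which is dominated by the term $h(j,k)=2^{j}(1-\rho(k))$ (cf.\ Figure~\ref{fig:dp}), where $\rho(k)=\frac{(2^{x}-n)!\,(2^{x}-2^{x-k})!}{(2^{x})!\,(2^{x}-2^{x-k}-n)!}$. The key observation is that the gigantic factorials are never materialized: $\rho(k)$ telescopes into a product of $n$ bounded factors, $\rho(k)=\prod_{t=0}^{n-1}\frac{2^{x}-2^{x-k}-t}{2^{x}-t}$ (with the degenerate case $2^{x}-2^{x-k}<n$, where $\rho(k)=0$ and $h(j,k)=2^{j}$, detected in $O(1)$). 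Treating each arithmetic operation on these fixed-precision quantities as $O(1)$, one transition costs $O(n)$, so the overall running time is $O(lx^{2})\cdot O(n)=O(nlx^{2})$. I would also remark that caching $\rho(k)$ for every $k\in\{0,\ldots,x\}$ at a one-time cost of $O(nx)$ makes each transition $O(1)$ and improves the bound to $O(nx+lx^{2})$; since $n\ge 1$ this is subsumed by the stated bound.

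For the space bound, the table $g$ has $l$ rows and $x+1$ columns, each cell holding a constant number of fixed-precision values; storing the full table together with the $O(lx)$ backtracking choices needed to reconstruct the optimal fan-outs $a_{0}^{*},\ldots,a_{l-1}^{*}$ uses $O(lx)$ space. (If only the scalar $g(l-1,x)$ were needed, a rolling pair of length-$(x+1)$ rows would suffice, i.e., $O(x)$; the $O(lx)$ figure is the price of also recovering the configuration.)

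The step I expect to be the main obstacle is making rigorous the $O(n)$ bound on evaluating $\rho(k)$ rather than a bound depending on $2^{x}$: this rests on the telescoping identity above together with the modeling convention, used throughout the paper, that arithmetic on magnitudes bounded by $2^{x}$ (and ratios thereof) counts as $O(1)$ on the word-RAM. Once that is granted, the remaining steps are routine loop counting.
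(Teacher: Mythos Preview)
Your proposal is correct and follows essentially the same argument as the paper: count the $O(lx^{2})$ DP transitions, charge $O(n)$ arithmetic operations to each transition for evaluating the factorial ratio, and observe that storing the full $g$ table plus backpointers costs $O(lx)$. Your additional remarks (the telescoping identity for $\rho(k)$, the caching improvement to $O(nx+lx^{2})$, and the rolling-row $O(x)$ alternative) are sound elaborations beyond what the paper spells out, but the core decomposition is identical.
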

\begin{proof}
To compute $g(l-1,x)$, we need to compute $g(i,j)$ for all $0\leq i<l-1,0\leq j\leq x$, which contains $O(lx)$ subproblems. To compute each $g(i,j)$, we need to enumerate $k$ for all $0\leq k\leq j$ (i.e., $O(x)$ subproblems) to find the optimal transition. Each transition requires $O(n)$ multiplications and divisions. Therefore, the total time complexity is $O(nlx^2)$. For the space complexity, we need to store $g(i,j)$ for all $O(lx)$ subproblems and the previous optimal subproblem $g(i-1,k)$ contributing to $g(i,j)$, such that we can find a path for optimal transition to $g(l-1,x)$ which recovers the optimal $a_i$.
\end{proof}

Practically, we can further apply pruning to speed up the computation without harming the results. Since $\frac{p}{q}<\frac{p+1}{q+1}$ when $p<q$, we know that the factorial part is not larger than $\left(\frac{2^x-2^{x-k}}{2^x}\right)^n$. Therefore, if $g(i-1,k)+2^j\left(1-\left(\frac{2^x-2^{x-k}}{2^x}\right)^n\right)$ is larger than current $g(i,j)$, we can safely prune $g(i-1,k)$ without computing factorials.
\end{document}